\providecommand{\U}[1]{\protect\rule{.1in}{.1in}}
\theoremstyle{plain}
\newtheorem{thm}{Theorem}
\theoremstyle{definition}
\newtheorem{defn}{Definition}
\theoremstyle{proposition}
\newtheorem{prop}{Proposition}
\theoremstyle{lemma}
\newtheorem{lemma}{Lemma}
\theoremstyle{corollary}
\newtheorem{coro}{Corollary}
\begin{document}
\title{On the embedding of Weyl manifolds }

\begin{abstract}
We discuss the possibility of extending different versions of the
Campbell-Magaard theorem, which have already been established in the context
of semi-Riemannian geometry, to the context of Weyl's geometry. We show that
some of the known results can be naturally extended to the new geometric
scenario, although new difficulties arise. In pursuit of solving the embedding
problem we have obtained some no-go theorems. We also highlight some of the
difficulties that appear in the embedding problem, which are typical of the
Weylian character of the geometry. The establishing of these new results may
be viewed as part of a program that highlights the possible significance of
embedding theorems of increasing degrees of generality in the context of
modern higher-dimensional space-time theories.

\end{abstract}
\author{R. Avalos, F. Dahia, C. Romero}
\affiliation{Departamento de F\'{\i}sica, Universidade Federal da Para\'{\i}ba, Caixa
Postal 5008, 58059-970 Jo\~{a}o Pessoa, PB, Brazil }
\affiliation{E-mail: ravalos@gmail.com; fdahia@fisica.ufpb.br; cromero@fisica.ufpb.br }

\pacs{04.20.Jb, 11.10.kk, 98.80.Cq}
\keywords{Extra dimensions, Weyl geometry, embedding.}\maketitle

\section{Introduction}

The unification of the fundamental forces of nature is now recognized to be
one of the most important tasks in theoretical physics. Unification, in fact,
has been a feature of all great theories of physics. It is a well knonw fact
that Newton, Maxwell and Einstein, they all succeeded in performing some sort
of unification. So, not surprisingly in the last two centuries physicists have
recurrently pursued this theme. Broadly speaking one can mention two different
paths followed by theoreticians to arrive at a unified field theory.\ First,
there are the early attempts of Einstein, Weyl, Cartan, Eddington,
Schr\"{o}dinger and many others, whose aim consisted of unifying gravity and
electromagnetism \cite{Goenner}. Their approach consisted basically in
resorting to different kinds of non-Riemannian geometries capable of
accomodating new geometrical structures with a sufficient number of degrees of
freedom to describe the electromagnetic field. In this way, different types of
geometries have been invented, such as affine geometry, Weyl's geometry (where
the notion of parallel transport generalizes that of Levi-Civita's),
Riemann-Cartan geometry (in which torsion is introduced), to quote only a few.
In fact, it is not easy to track all further developments of these geometries,
most of which were clearly motivated by the desire of extending general
relativity to accomodate in its scope the electromagnetic field. However, as
we now see, the main problem with all these attempts was that they completely
ignored the other two fundamental interactions and did not take into account
quantum mechanics, dealing with unification only at classical level. Of
course, an approach to unification today would necessarily take into account
quantum field theory.

The second approach to unification has to do with the rather old idea that our
space-time may have more than four dimensions. This program starts with the
work of the Finnish physicist Gunnar Nordstr\"{o}m \cite{Nordström}, in 1914.
Nordstr\"{o}m realised that by postulating the existence of a fifth dimension
he was able (in the context of his scalar theory of gravitation) to unify
gravity and electromagnetism. Although the idea was quite original and
interesting, it seems the paper did not attract much attention due to the fact
that his theory of gravitation was not accepted at the time. Then, soon after
the completion of general relativity, Th\'{e}odor Kaluza, and later, Oscar
Klein, launched again the same idea, now entirely based on Einstein's theory
of gravity. Kaluza-Klein theory \ starts from five-dimensional vacuum
Einstein's equations and shows that, under certain assumptions, the field
equations reduce to a four-dimensional system of coupled Einstein-Maxwell
equations. This seminal idea has given rise to several different theoretical
developments, all of them exploring the possibility of achieving unification
from extra dimensionality of space-time. Indeed, through the old and modern
versions of Kaluza-Klein theory \cite{kaluza,kaluza1,kaluza2}, supergravity
\cite{deser}, superstrings \cite{superstrings}, and to the more recent
braneworld scenario\cite{rs1,rs2}, induced-matter \cite{overduin,book} and
M-theory \cite{duff}, there has been a strong belief among some physicists
that unification might finally be achieved if one is willing to accept that
space-time has more than four dimensions.

Among all these higher-dimensional theories, one of them, the induced-matter
theory (also referred to as space-time-matter theory \cite{overduin,book}) has
called our attention since it vividly recalls Einstein's belief that matter
and radiation (not only the gravitational field) should ultimately be viewed
as manifestations of pure geometry\cite{Einstein3}. Kaluza-Klein theory was a
first step in this direction. But it was Paul Wesson \cite{book}, from the
University of Waterloo, who pursued the matter further. Wesson and
collaborators realized that by embedding the ordinary space-time into a
five-dimensional vacuum space, it was possible to describe the macroscopic
properties of matter in geometrical terms. In a series of interesting papers
Wesson and his group showed how to produce standard cosmological models from
five-dimensional vacuum space. It looked like as if any energy-momentum tensor
could be generated by an embedding mechanism. At the time these facts were
discovered, there was no guarantee that \textit{any }energy-momentum tensor
could be obtained in this way. Putting it in mathematical terms, Wesson's
programm would not always work unless one could prove that \textit{any
}solution of Einstein's field equations could be isometrically embedded in
five-dimensional Ricci-flat space \cite{Romero}. It turns out, however, that
this is exactly the content of a beautilful and powerful theorem of
differential geometry now known as the Campbell-Magaard theorem
\cite{Campbell}. This theorem, little known until recently, was proposed by
English mathematician John Campbell in 1926, and was given a complete proof in
1963 by Lorenz Magaard \cite{Magaard}. Campbell \cite{Campbell}, as many
geometers of his time, was interested in geometrical aspects of Einstein's
general relativity and his works \cite{Campbell2} were published a few years
before the classical Janet-Cartan \cite{janet,cartan} theorem on embeddings
was established. Compared to the Janet-Cartan theorem the nice thing about the
Campbell-Magaard's result is that the codimension of the embedding space is
drastically reduced: one needs only one extra-dimension, and that perfectly
fits the requirements of the induced-matter theory. Finally, let us note that
both theorems refer to local and analytical embeddings (the global version of
Janet-Cartan theorem was worked out by John Nash \cite{Nash}, in 1956, and
adapted for semi-Riemannian geometry by R. Greene \cite{Greene}, in
1970,\ while a discussion of global aspects of Cambell-Magaard has recently
appeared in the literature \cite{Kat}).

\section{\qquad Higher-dimensional space-times and Riemannian extensions of
the Campbell-Magaard theorem}

Besides the induced-matter proposal, there appeared at the turn of the XX
century some other physical models of the Universe, which soon attracted the
attention of theoreticians. These models have put forward the idea that the
space-time of our everyday perception may be viewed as a four-dimensional
hypersurface embedded not in a Ricci-flat space, but in a five-dimensional
Einstein space (referred to as \textit{the bulk}) \cite{rs1,rs2}. Spurred by
this proposal new research on the geometrical structure of the proposed models
started. It was conjectured \cite{anderson} and later proved that the
Campbell-Magaard theorem could be immediately generalized for embedding
Einstein spaces \cite{dahia1}. This was the first extension of the
Campbell-Magaard theorem and other extensions, still in the context of
Riemannian geometry, were to come. More general local isometric embeddings
were next investigated, and it was proved that any $n$-dimensional
semi-Riemannian analytic manifold can be locally embedded in \ $(n+1)$%
-dimensional analytic manifold with a non-degenerate Ricci-tensor, which is
equal, up to a local analytical diffeomorphism, to the Ricci-tensor \ of an
arbitrary specified space \cite{dahia3}. Further motivation in this direction
came from studying embeddings in the context of non-linear sigma models, a
theory proposed by J. Schwinger in the fifties to describe strongly
interacting massive particles \cite{Schwinger}. It was then showed that any
$n$-dimensional Lorentzian manifold $(n\geq3)$ can be harmonically embedded in
an $(n+1)$-dimensional semi-Riemannian Ricci-flat manifold \cite{dahia4}.

At this point we should remark that most theories that regard our spacetime as
a hypersurface embedded in a higher-dimensional manifold \cite{Roy} make the
tacit assumption that this hypersurface has a semi-Riemannian geometrical
structure. Surely, this assumption avoids possible conflicts with the
well-established theory of general relativity, which operates in a Riemannian
geometrical frame. However, recently there has been some attempts to broaden
this scenario. For instance, new theoretical schemes have been proposed, where
one of the most simple generalizations of non-Riemannian geometry, namely the
Weyl geometry \cite{Weyl}, has been taken into consideration as a viable
possibility to describe the geometry of the bulk
\cite{Israelit,Arias,Nandinii}. In some of these approaches, the
induced-matter theory is revisited to show that\ it is even possible to
generate a cosmological constant, or rather, a cosmological function, from the
extra dimensions and the Weyl field \cite{madriz}. In a similar context, it
has also been shown how the presence of the Weyl field may affect both the
confinement and/or stablity of particles motion, and how a purely geometrical
field, such as the Weyl field, may effectively act both as a classical and
quantum scalar field, which in some theoretical-field modes is the responsible
for the confinement of matter in the brane \cite{jansen,gomez}.

There is also another very interesting and compelling argument for considering
a Weyl structure as a suitable mathematical model for describing space-time.
This is based on the well-known axiomatic approach to space-time theory put
forward by Ehlers, Pirani and Schild (EPS), which, through an elegant and
powerful theoretical construction, shows that by starting from a minimum set
of rather plausible and general axioms concerning the motion of light signals
and freely falling particles, one is naturally led to a Weyl structure as the
proper framework of space-time \cite{Pirani}. In order to reduce this more
general framework to that of a semi-Riemannian manifold we need an aditional
axiom to be added to this minimum set. \ It turns out, however, that this
added axiom does not seem as natural as the others, as was pointed out by
Perlick \cite{Perlik}. We take Perlick's point of view as one of the
motivations for investigating the geometry of Weyl spaces.

In this paper we shall consider the mathematical problem of extending
different versions of the Campbell-Magaard theorem from the Riemannian context
to Weyl's geometry. Specifically, we shall first analyze the possibility of
locally and analytically embedding an $n$-dimensional Weyl manifold in an
$(n+1)$-dimensional Weyl space, the latter being Ricci-flat. We then weaken
this condition to investigate the problem of embedding manifolds whose
symmetric part of the Ricci tensor vanishes. These problems can be regarded as
extensions of the Campbell-Magaard theorems, which hold in Riemannian
geometry, to a more general geometrical setting, namely that of Weyl's
geometry. We believe that an investigation of these seemingly purely
geometrical problems may also shed some light on the physics of
higher-dimensional theories in which there are extra degrees of freedom coming
from the geometric structure of space-time, in particular, those in which
there are mechanisms for generating matter and fields from extra dimensions in
the case of theories of gravitation whose geometrical framework is based on
the Weyl theory, and other higher-dimensional proposals formulated in Weyl
manifolds, such as D-dimensional dilaton gravity \cite{Bronnikov},
higher-dimensional WIST theories \cite{Israelit},\cite{Arias},\cite{madriz}%
,\cite{Novello} and others.

Finally, a few words should be said with regard to the Campbell-Maggard
theorem and its application to physics. First, let us note that the proof
provided by Magaard is based on the Cauchy-Kovalevskaya theorem. Therefore,
some properties of relevance to physics, such as the stability of the
embedding, cannot be guaranteed to hold \cite{Anderson}. Nevertheless, the
problem of embedding space-time into five-dimensional spaces can be considered
in the context of the Cauchy problem for general relativity \cite{Yvonne}.
Specifically, it has recently been shown that the embedded space-time may
arise as a result of physical evolution of proper initial data. This new
perspective has some advantages in comparison with the original
Campbell-Magaard formulation because, by exploring the hyperbolic character of
the field equations, it allows to show that the embedding has stability and
domain of dependence (causality) properties \cite{dahia5}.

\section{Weyl geometry}

When working in Riemannian geometry we consider a pair $(M,g)$, where $M$ is a
differentiable manifold and $g$ a (semi)-Riemannian metric defined on $M$. The
fundamental theorem of Riemannian geometry states that there is a unique
torsionless linear connection \textit{compatible} with $g$ \cite{do Carmo}.
By\textit{ compatibility} we mean the following. When we endow any
differentiable manifold with a linear connection $\nabla$ we have an
associated notion of parallel transport. It is well known that parallel
transport defines isomorphisms between tangent spaces. The compatibility
condition is defined as the requirement that this isomorphism is also an
isometry. This turns out to be equivalent to the following requirement
\[
\nabla g=0.
\]
It turns out that in Weyl's geometry we relax the requirement of $\nabla$
being compatible with $g$, and this means that parallel transports are not
required to define isometries anymore. We first endow $M$ not only with a
semi-Riemannian metric, but also with a one-form field $\omega$, so that
instead of the pair $(M,g)$ we now consider a triple $(M,g,\omega)$. Weyl's
connection is defined by requiring it to be torsionless and that, for any
parallel vector field $V$ along any smooth curve $\gamma$, the following
condition is satisfied:
\begin{equation}
\frac{d}{dt}g(V(t),V(t))=\omega(\gamma^{\prime}(t))g(V(t),V(t)).
\label{weyltransp}%
\end{equation}
Before presenting the main existence and uniqueness theorems for such
connection, we shall try to get some insight on what this condition means
geometrically. First of all, note that because parallel transport is a linear
application,\ if $V,W$ are parallel fields along some curve $\gamma$, then
$V+W$ also is a parallel field along $\gamma$. On the other hand, by
polarization we get
\[
g(V,W)=\frac{1}{2}(g(V+W,V+W)-g(V,V)-g(W,W)),
\]
which together with (\ref{weyltransp}) gives
\[
\frac{d}{dt}g(V,W)=\omega(\gamma^{\prime})g(V,W).
\]
We thus say that the connection $\nabla$ is \textit{Weyl compatible} with
$(M,g,\omega)$ if for any pair of parallel vectors along any smooth curve
$\gamma=\gamma(t)$ the condition below is satisfied
\[
\frac{d}{dt}g(V(t),W(t))=\omega(\gamma^{\prime}(t))g(V(t),W(t)),
\]
where $\gamma^{\prime}$ denotes the tangent vector of $\gamma.$ Integrating
the above equation along the curve $\gamma$ leads to
\begin{equation}
g(V(t),W(t))=g(V(0),W(0))e^{\int_{0}^{t}\omega(\gamma^{\prime}(s))ds}.
\label{weyltransp2}%
\end{equation}
In particular, if $V=W$ this last expression gives us precisely how much the
parallel transport fails to be an isometry:
\[
g(V(t),V(t))=g(V(0),V(0))e^{\int_{0}^{t}\omega(\gamma^{\prime}(s))ds}.
\]
Note that if the vectors $V(0)$ and $W(0)$ are orthogonal, then
(\ref{weyltransp2}) implies that they remain orthogonal when parallel
transported along the curve, although their respective "norms" may change.

Let us now state some results that hold for a Weyl connection which are
analogues to those valid for a Riemannian connection. All these results are
proven in very much the same way as in Riemannian geometry.

\begin{prop}
A connection $\nabla$ is compatible with a Weyl structure $(M,g,\omega)$ iff
for any pair of vector fields $V,W$ along any smooth curve $\gamma$ in $M$ the
following holds:
\begin{equation}
\frac{d}{dt}g(V,W)=g(\frac{DV}{dt},W)+g(V,\frac{DW}{dt})+\omega(\gamma
^{\prime})g(V,W) \label{weyltransp3}%
\end{equation}

\end{prop}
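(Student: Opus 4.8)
The plan is to prove both implications by working in a parallel frame along the curve, mirroring the classical Riemannian computation. For the ``only if'' direction, suppose $\nabla$ is compatible with $(M,g,\omega)$. Fix a smooth curve $\gamma$ and a parameter value, and parallel transport a basis of the corresponding tangent space along $\gamma$ to obtain a frame $\{E_1(t),\dots,E_n(t)\}$ of parallel vector fields along $\gamma$; the existence and uniqueness of such a frame follow from the standard theory of linear ODEs applied to the parallel transport equation. Expand arbitrary fields along $\gamma$ as $V=\sum_i v^i E_i$ and $W=\sum_j w^j E_j$ with smooth coefficient functions $v^i,w^j$. Since the $E_i$ are parallel, $DV/dt=\sum_i \dot v^i E_i$ and $DW/dt=\sum_j \dot w^j E_j$, and differentiating $g(V,W)=\sum_{i,j} v^i w^j g(E_i,E_j)$ via the Leibniz rule produces three groups of terms.

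Two of these groups assemble into $g(DV/dt,W)+g(V,DW/dt)$. The remaining group is $\sum_{i,j} v^i w^j \tfrac{d}{dt}g(E_i,E_j)$, and here I invoke the polarized compatibility identity $\tfrac{d}{dt}g(E_i,E_j)=\omega(\gamma')g(E_i,E_j)$ — already derived in the text from (\ref{weyltransp}) for parallel fields — which collapses this group to $\omega(\gamma')\sum_{i,j} v^i w^j g(E_i,E_j)=\omega(\gamma')g(V,W)$. Summing the three contributions yields (\ref{weyltransp3}).

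For the ``if'' direction, assume (\ref{weyltransp3}) holds for every pair of vector fields along every curve. Specialize to fields $V,W$ that are parallel along $\gamma$, so $DV/dt=DW/dt=0$; then (\ref{weyltransp3}) reduces to $\tfrac{d}{dt}g(V,W)=\omega(\gamma')g(V,W)$, and in particular, taking $W=V$, we recover exactly the defining condition (\ref{weyltransp}) of Weyl compatibility. Hence $\nabla$ is compatible with $(M,g,\omega)$.

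No serious obstacle is expected, since the argument is the Weylian counterpart of the standard Riemannian fact; the only points deserving care are the legitimacy of the parallel-frame construction (linear ODE existence and uniqueness along $\gamma$, with smooth dependence on the parameter) and the bookkeeping that separates the Leibniz expansion into the three advertised groups. Everything else is routine.
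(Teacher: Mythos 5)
Your proof is correct and follows essentially the route the paper intends: the paper omits the argument, stating that these results are ``proven in very much the same way as in Riemannian geometry,'' and the standard Riemannian proof is precisely your parallel-frame computation (expand $V,W$ in a parallel frame, apply Leibniz, and use the polarized compatibility identity for parallel fields), with the converse obtained by specializing (\ref{weyltransp3}) to parallel fields. Both directions are handled correctly and no gap remains.
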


\begin{coro}
A linear connection $\nabla$ is compatible with a Weyl structure
$(M,g,\omega)$ iff $\forall$ $p\in M$ and for every vector fields $X,Y,Z$ on
$M$ the condition below holds
\begin{equation}
X_{p}(g(Y,Z))=g_{p}(\nabla_{X_{p}}Y,Z_{p})+g_{p}(Y_{p},\nabla_{X_{p}}%
Z)+\omega_{p}(X_{p})g_{p}(Y_{p},Z_{p}) \label{weyltransp4}%
\end{equation}

\end{coro}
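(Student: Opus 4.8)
The plan is to derive the pointwise identity (\ref{weyltransp4}) from the curve identity (\ref{weyltransp3}) of the Proposition, and conversely, so that both become equivalent characterizations of Weyl compatibility; the Corollary then follows at once.

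First I would prove that (\ref{weyltransp3}) implies (\ref{weyltransp4}). Fix $p\in M$ and vector fields $X,Y,Z$ on $M$. Choose a smooth curve $\gamma$ with $\gamma(0)=p$ and $\gamma^{\prime}(0)=X_{p}$ (for instance an integral curve of $X$ through $p$), and set $V(t):=Y_{\gamma(t)}$ and $W(t):=Z_{\gamma(t)}$, which are smooth vector fields along $\gamma$. By the very definition of the covariant derivative along a curve one has $\frac{DV}{dt}\big|_{t=0}=\nabla_{X_{p}}Y$ and $\frac{DW}{dt}\big|_{t=0}=\nabla_{X_{p}}Z$, while $\frac{d}{dt}g(V,W)\big|_{t=0}=\frac{d}{dt}\big(g(Y,Z)\circ\gamma\big)\big|_{t=0}=X_{p}(g(Y,Z))$ and $\omega(\gamma^{\prime}(0))=\omega_{p}(X_{p})$. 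Substituting into (\ref{weyltransp3}) evaluated at $t=0$ gives exactly (\ref{weyltransp4}) at $p$.

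Conversely, assuming (\ref{weyltransp4}), I would recover (\ref{weyltransp3}) for an arbitrary curve $\gamma$, arbitrary vector fields $V,W$ along $\gamma$, and an arbitrary parameter value $t_{0}$. Every term appearing in (\ref{weyltransp3}) at $t_{0}$ depends only on $\gamma(t_{0})$, $\gamma^{\prime}(t_{0})$ and the first-order data of $V$ and $W$ at $t_{0}$, so it is enough to treat the case in which $\gamma$ is an embedding on a neighbourhood of $t_{0}$; the degenerate case $\gamma^{\prime}(t_{0})=0$ then follows by the usual continuity argument, exactly as in Riemannian geometry. In the embedded case $V$ and $W$ extend to local vector fields $\widetilde{Y},\widetilde{Z}$ on a neighbourhood of $\gamma(t_{0})$ with $\widetilde{Y}_{\gamma(t)}=V(t)$ and $\widetilde{Z}_{\gamma(t)}=W(t)$ for $t$ near $t_{0}$; applying (\ref{weyltransp4}) at the point $\gamma(t_{0})$ with $X=\gamma^{\prime}(t_{0})$, $Y=\widetilde{Y}$, $Z=\widetilde{Z}$ and reinterpreting the three derivative terms as derivatives along $\gamma$ yields (\ref{weyltransp3}) at $t_{0}$.

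Combining these two implications with the Proposition, the pointwise condition (\ref{weyltransp4}) holds for all $p$ and all $X,Y,Z$ precisely when $\nabla$ is compatible with $(M,g,\omega)$, which is the assertion. The only genuinely delicate point is the extension of fields given along a curve to fields on an open set, which is what forces the split into the embedded and degenerate cases; this is handled exactly as in the classical Riemannian argument (cf. \cite{do Carmo}), and the extra term $\omega(\gamma^{\prime})g(V,W)$ plays no role in that technicality, being already tensorial in $\gamma^{\prime}$.
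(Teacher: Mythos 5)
Your argument is correct and is essentially the proof the paper has in mind: the paper states this corollary without proof, appealing to the fact that it ``is proven in very much the same way as in Riemannian geometry,'' and your derivation is precisely that standard do Carmo-style argument (restrict $Y,Z$ along a curve with $\gamma'(0)=X_p$ for one direction, locally extend fields along an embedded curve for the other) with the extra term $\omega(\gamma')g(V,W)$ carried along, which is tensorial in $\gamma'$ and causes no new difficulty. The only cosmetic remark is that the degenerate case $\gamma'(t_0)=0$ is most cleanly settled by the direct coordinate computation (both sides reduce to $g_{ij}\dot V^iW^j+g_{ij}V^i\dot W^j$ and $\omega(\gamma')=0$) rather than by continuity, but this does not affect the validity of your proof.
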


In the last proposition we can actually drop $p$ and write
\[
X(g(Y,Z))=g(\nabla_{X}Y,Z)+g(Y,\nabla_{X}Z)+\omega(X)g(Y,Z),
\]
which, then, can be used to prove the following:

\begin{prop}
A linear connection $\nabla$ is compatible with a Weyl structure
$(M,g,\omega)$ iff it satisfies
\begin{equation}
\label{weylcompatibility}\nabla g=\omega\otimes g
\end{equation}

\end{prop}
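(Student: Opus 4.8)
The plan is to show the equivalence of the tensorial identity $\nabla g=\omega\otimes g$ with the coordinate-component version established in the previous corollary, namely
\[
X(g(Y,Z))=g(\nabla_{X}Y,Z)+g(Y,\nabla_{X}Z)+\omega(X)g(Y,Z)
\]
for all vector fields $X,Y,Z$. Since the Corollary already gives the equivalence of this component identity with compatibility of $\nabla$ with the Weyl structure, it suffices to relate this identity to $\nabla g=\omega\otimes g$; transitivity of ``iff'' then closes the argument.

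First I would recall the definition of the covariant derivative of the metric as a $(0,3)$-tensor: for vector fields $X,Y,Z$,
\[
(\nabla g)(X,Y,Z)=(\nabla_{X}g)(Y,Z)=X(g(Y,Z))-g(\nabla_{X}Y,Z)-g(Y,\nabla_{X}Z).
\]
On the other hand, $(\omega\otimes g)(X,Y,Z)=\omega(X)\,g(Y,Z)$. Comparing the two, the identity $\nabla g=\omega\otimes g$ holds, as an equality of tensors evaluated on arbitrary $X,Y,Z$, precisely when
\[
X(g(Y,Z))-g(\nabla_{X}Y,Z)-g(Y,\nabla_{X}Z)=\omega(X)\,g(Y,Z),
\]
which is exactly the identity in the Corollary. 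Because both sides are tensorial in $X,Y,Z$ (the Leibniz terms being built to cancel the non-tensorial part of $X(g(Y,Z))$), this pointwise comparison of multilinear maps is legitimate, and one direction of each implication is immediate from the other. I would then invoke the Corollary to conclude that $\nabla g=\omega\otimes g$ is equivalent to $\nabla$ being compatible with $(M,g,\omega)$.

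There is no serious obstacle here; the only point requiring a moment's care is verifying that $\nabla_{X}g$ is genuinely $C^{\infty}(M)$-linear in all three slots, so that the equality of the two sides can be checked on a local frame or even pointwise on vectors rather than vector fields — this is the standard tensoriality check for $\nabla g$, and it is precisely what legitimizes passing between the ``for all vector fields'' formulation and the tensor-equation formulation. I would remark that this computation is formally identical to the Riemannian case, the sole novelty being the extra term $\omega\otimes g$ replacing $0$, consistent with the paper's comment that all these results are proven ``in very much the same way as in Riemannian geometry.''
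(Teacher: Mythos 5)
Your argument is correct and follows exactly the route the paper intends: the displayed identity from the Corollary is rewritten, via the definition $(\nabla_{X}g)(Y,Z)=X(g(Y,Z))-g(\nabla_{X}Y,Z)-g(Y,\nabla_{X}Z)$, as the tensor equation $\nabla g=\omega\otimes g$, just as in the Riemannian case. Nothing further is needed.
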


Now the following result is easily established.

\begin{prop}
\label{Weyluniq} There is a unique torsionless connection compatible with the
Weyl structure $(M,g,\omega)$.
\end{prop}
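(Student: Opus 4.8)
The plan is to follow the same route as the proof of the fundamental theorem of (semi-)Riemannian geometry, replacing the usual metric compatibility condition by its Weyl analogue~(\ref{weyltransp4}). For \emph{uniqueness}, suppose $\nabla$ is any torsionless connection compatible with $(M,g,\omega)$. Writing~(\ref{weyltransp4}) three times, for the triples $(X,Y,Z)$, $(Y,Z,X)$ and $(Z,X,Y)$, and forming the combination ``first $+$ second $-$ third'', the mixed terms $g(Y,\nabla_{X}Z)$, $g(Z,\nabla_{Y}X)$ and $g(X,\nabla_{Z}Y)$ recombine, with the help of torsionlessness in the form $\nabla_{A}B-\nabla_{B}A=[A,B]$, into Lie brackets. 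What remains is a Koszul-type identity
\begin{align}
2\,g(\nabla_{X}Y,Z) ={}& X(g(Y,Z))+Y(g(Z,X))-Z(g(X,Y)) \nonumber\\
&{}+g([X,Y],Z)-g([X,Z],Y)-g([Y,Z],X) \nonumber\\
&{}-\omega(X)g(Y,Z)-\omega(Y)g(Z,X)+\omega(Z)g(X,Y).
\end{align}
Since the right-hand side depends only on $g$, $\omega$ and the vector fields, and $g$ is non-degenerate, this determines $\nabla_{X}Y$ completely; hence at most one such connection can exist.

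For \emph{existence}, I would use the displayed identity to \emph{define} $\nabla$: for fixed $X,Y$ the right-hand side is $\mathbb{R}$-linear and $C^{\infty}$-linear in $Z$, so by non-degeneracy of $g$ there is a unique vector field $\nabla_{X}Y$ realizing it. One then has to check that $\nabla$ so defined is genuinely a linear connection which is torsionless and Weyl-compatible. The cleanest way to do this is to observe, directly from the formula, that
\[
\nabla_{X}Y=\nabla^{g}_{X}Y-\tfrac{1}{2}\bigl(\omega(X)\,Y+\omega(Y)\,X-g(X,Y)\,\omega^{\sharp}\bigr),
\]
where $\nabla^{g}$ is the Levi-Civita connection of $g$ and $\omega^{\sharp}$ is the vector field metrically dual to $\omega$. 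Because $\nabla^{g}$ is a connection and the correction term is $C^{\infty}$-bilinear in $(X,Y)$, $\nabla$ is a linear connection; because the correction term is symmetric in $X$ and $Y$, $\nabla$ is torsionless; and substituting the expression into the right-hand side of~(\ref{weyltransp4}), using $\nabla^{g}g=0$ together with $g(\omega^{\sharp},\cdot)=\omega(\cdot)$, shows after a one-line cancellation that $\nabla g=\omega\otimes g$, i.e.\ that $\nabla$ is compatible with $(M,g,\omega)$.

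The genuinely delicate points here are bookkeeping rather than conceptual. In deriving the Koszul identity one must make sure that the non-tensorial (derivative) terms produced by the three applications of~(\ref{weyltransp4}) cancel precisely against those produced when $\nabla_{A}B$ is rewritten via the bracket; and in the existence half one must check that the candidate $\nabla_{X}Y$ extracted from the formula really satisfies the Leibniz rule $\nabla_{X}(fY)=X(f)\,Y+f\,\nabla_{X}Y$ and the tensoriality $\nabla_{fX}Y=f\,\nabla_{X}Y$, the terms $X(f)g(Y,Z)$ and $Y(f)g(X,Z)$ appearing when $f$ is pulled through having to combine correctly with the $\omega$-terms. The closed-form expression in terms of $\nabla^{g}$ renders all of these verifications immediate, which is why I would present existence in that way rather than by manipulating the Koszul formula directly; I expect the only real obstacle to be the organizational one of keeping signs and the placement of $\omega$ consistent across the cyclic permutation.
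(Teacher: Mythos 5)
Your proof is correct and follows essentially the same route the paper has in mind: the Koszul-type argument adapted from Riemannian geometry with the extra $\omega$-terms, which is exactly how the paper says the result is "easily established." Your closed-form expression $\nabla_{X}Y=\nabla^{g}_{X}Y-\tfrac{1}{2}\bigl(\omega(X)Y+\omega(Y)X-g(X,Y)\omega^{\sharp}\bigr)$ is precisely the invariant form of the coordinate expression (\ref{weylconnec}) that the paper quotes as the output of its proof, so nothing of substance differs.
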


In the proof of this proposition it is found that the Weyl connection, in a
particular coordinate system, takes the following form:
\begin{equation}
\label{weylconnec}\Gamma^{u}_{ac}=\frac{1}{2}g^{bu}(\partial_{a}%
g_{bc}+\partial_{c}g_{ab}-\partial_{b}g_{ca})+\frac{1}{2}g^{bu}(\omega
_{b}g_{ca}-\omega_{a}g_{bc}-\omega_{c}g_{ab})
\end{equation}

It is important to note that a Weyl manifold defines an equivalence class of
such structures all linked by the following group of transformations:
\begin{equation}
\left\{
\begin{array}
[c]{ll}%
\overline{g}=e^{-f}g & \\
\overline{\omega}=\omega-df &
\end{array}
\right.  \label{weyl group}%
\end{equation}
where $f$ is an arbitrary smooth function defined on $M$. It is easy to check
that these transformations define an equivalence relation between Weyl
manifolds, and that if $\nabla$ is compatible with $(M,g,\omega),$ then it is
also compatible with $(M,\overline{g},\overline{\omega})$. In this way every
member of the class is compatible with the same connection, hence has the same
geodesics, curvature tensor and any other property that depends only on the
connection. This is the reason why it is regarded more natural, when dealing
with Weyl manifolds, to consider the whole class of equivalence
$(M,[g],[\omega])$ rather than working with a particular element of this
class. In this sense, it is argued that only geometrical quantities that are
invariant under (\ref{weyl group}) are of real significance in the case of
Weyl geometry. Following the same line of argument it is assumed that only
physical theories and physical quantities presenting this kind of invariance
should be considered of interest in this context. To conclude this section, we
remark that when the one-form field $\omega$ is an exact form, then the Weyl
structure is called \textit{integrable}.

\subsection{Weyl submanifolds}

\begin{defn}
Let $(\overline{M},\overline{g},\overline{\omega})$ be a Weyl manifold and
$M\hookrightarrow\overline{M}$ be a submanifold of $\overline{M}$. If the
pullback $i^{*}(g)$ is a metric tensor on $M$ then $(M,i^{*}(\overline
{g}),i^{*}(\overline{\omega}))$ is a \textit{Weyl submanifold} of
$\overline{M}$. In this case we will use the notation $g=i^{*}(\overline{g})$
and $\omega=i^{*}(\overline{\omega})$ for the induced metric and 1-form.
\end{defn}

Using the same conventions as in the previous definition, we denote by
$\overline{\nabla}$ the Weyl-compatible connection associated with
$(\overline{M},\overline{g},\overline{\omega})$. We define the induced
connection $\nabla$ on $M$ following the same reasoning as in the Riemannian
case. Thus if $X,Y$ are vector fields on $M$, and $\overline{X},\overline{Y}$
are extensions of these vector fields to $\overline{M}$, then $\nabla
_{X}Y\doteq(\overline{\nabla}_{\overline{X}}\overline{Y})^{T}$. It is a
well-known fact that this definition does not depend on the extensions
\cite{do Carmo}.

It is worth noticing that both the definition of Weyl submanifold and of
induced connection make sense in the whole class $(\overline{M},[\overline
{g}],[\overline{\omega}])$. We can see that the definition of Weyl submanifold
satisfies this condition since every such structure that can be obtained from
an element of $(\overline{M},[\overline{g}],[\overline{\omega}])$ lies in
$(M,[g],[\omega])$ and vice versa, every element of $(M,[g],[\omega])$ can be
obtained from some element of $(\overline{M},[\overline{g}],[\overline{\omega
}])$. The fact that the definition of induced connection is invariant in the
whole class $(\overline{M},[\overline{g}],[\overline{\omega}])$, is because
two conformal metrics make the same splitting of the tangent spaces:
$T_{p}\overline{M}=T_{p}M\oplus T_{p}M^{\perp}$.

The following results are obtained in the same way as in Riemannian geometry.

\begin{prop}
Given a Weyl manifold $(\overline{M},\overline{g},\overline{\omega})$ and a
Weyl submanifold $M\hookrightarrow\overline{M}$, the induced connection
$\nabla$ on $M$ is the Weyl connection compatible with the induced Weyl
structure on $(M,g,\omega)$.
\end{prop}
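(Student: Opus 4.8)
The plan is to mimic the classical Riemannian argument that the induced connection on a submanifold is the Levi-Civita connection of the induced metric, replacing the compatibility condition $\nabla g = 0$ by the Weyl compatibility condition $\nabla g = \omega \otimes g$ established in Proposition~\ref{weylcompatibility} (and its infinitesimal form in Corollary~\ref{weyltransp4}). By Proposition~\ref{Weyluniq}, the Weyl connection compatible with $(M,g,\omega)$ is unique, so it suffices to check that the induced connection $\nabla$, defined by $\nabla_X Y \doteq (\overline{\nabla}_{\overline{X}}\overline{Y})^T$, is torsionless and satisfies the Weyl compatibility identity with respect to $g=i^*(\overline{g})$ and $\omega=i^*(\overline{\omega})$.

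First I would verify that $\nabla$ is torsionless. For vector fields $X,Y$ on $M$ with extensions $\overline{X},\overline{Y}$, one has $[\overline{X},\overline{Y}]$ restricted to $M$ equals $[X,Y]$ (an extension of it), and since $[X,Y]$ is tangent to $M$, its normal component vanishes; hence $\nabla_X Y - \nabla_Y X = (\overline{\nabla}_{\overline{X}}\overline{Y} - \overline{\nabla}_{\overline{Y}}\overline{X})^T = [\overline{X},\overline{Y}]^T = [X,Y]$, using that $\overline{\nabla}$ is torsionless. This is routine.

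Next, the main computation: Weyl compatibility. Take $X,Y,Z$ tangent to $M$ with extensions $\overline{X},\overline{Y},\overline{Z}$. Along $M$ we have $X(g(Y,Z)) = \overline{X}(\overline{g}(\overline{Y},\overline{Z}))$, and applying Corollary~\ref{weyltransp4} for $\overline{\nabla}$ gives
\begin{equation}
X(g(Y,Z)) = \overline{g}(\overline{\nabla}_{\overline{X}}\overline{Y},\overline{Z}) + \overline{g}(\overline{Y},\overline{\nabla}_{\overline{X}}\overline{Z}) + \overline{\omega}(\overline{X})\,\overline{g}(\overline{Y},\overline{Z}).
\end{equation}
Now I would decompose $\overline{\nabla}_{\overline{X}}\overline{Y} = (\overline{\nabla}_{\overline{X}}\overline{Y})^T + (\overline{\nabla}_{\overline{X}}\overline{Y})^\perp = \nabla_X Y + (\text{normal part})$, and similarly for $\overline{\nabla}_{\overline{X}}\overline{Z}$. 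Since $\overline{Z}$ restricted to $M$ is tangent to $M$, the term $\overline{g}((\overline{\nabla}_{\overline{X}}\overline{Y})^\perp,\overline{Z})$ vanishes on $M$ because $T_pM$ and $T_pM^\perp$ are $\overline{g}$-orthogonal; likewise the normal part of $\overline{\nabla}_{\overline{X}}\overline{Z}$ drops out when paired with $\overline{Y}$. This leaves $X(g(Y,Z)) = g(\nabla_X Y, Z) + g(Y, \nabla_X Z) + \omega(X) g(Y,Z)$ on $M$, which is exactly the Weyl compatibility condition for $\nabla$ with $(M,g,\omega)$. Invoking uniqueness (Proposition~\ref{Weyluniq}) finishes the proof.

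The only point requiring mild care — the step I expect to be the "obstacle," though it is minor — is ensuring that the pointwise identity on $M$ genuinely follows from the ambient identity on $\overline{M}$: one must check that restriction of extensions is well-defined (independence of extension was already noted in the text for the induced connection, and for $g(Y,Z)$ it follows since the metric and forms are pullbacks), and that the orthogonal splitting $T_p\overline{M} = T_pM \oplus T_pM^\perp$ is $\overline{g}$-orthogonal so that the cross terms vanish. Since $i^*(\overline{g})$ is assumed non-degenerate on $M$, this splitting is well-defined and the argument goes through verbatim as in the Riemannian case.
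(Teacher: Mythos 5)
Your proof is correct and follows exactly the route the paper intends: the paper offers no explicit proof, stating only that the result is "obtained in the same way as in Riemannian geometry," and your argument (torsion-freeness of the tangential projection, ambient Weyl compatibility in the form $X(g(Y,Z))=g(\nabla_XY,Z)+g(Y,\nabla_XZ)+\omega(X)g(Y,Z)$ with the normal parts dropping out by $\overline{g}$-orthogonality of the splitting, then uniqueness from Proposition \ref{Weyluniq}) is precisely that adaptation of the Levi-Civita argument. No gaps.
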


As usual, we define the second fundamental form $\alpha$ on $M$ as
\begin{align*}
\alpha:TM\times TM  & \mapsto TM^{\perp}\\
\alpha(X,Y)  &  \doteq(\overline{\nabla}_{\overline{X}}\overline{Y})^{\perp}%
\end{align*}
One can easily check that this definition does not depend on how we extend $X$
and $Y$ to $\overline{M}$. Thus, if $X,Y$ are fields on $M$ we write
\[
\overline{\nabla}_{X}Y=\nabla_{X}Y+\alpha(X,Y)
\]
The next proposition is analogous to its Riemannian counterpart:

\begin{prop}
The second fundamental form $\alpha$ is symmetric and $\mathfrak{F}(M)$-linear
in both arguments.
\end{prop}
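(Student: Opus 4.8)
The plan is to mimic the classical Riemannian argument, noting at each step that the Weyl one-form $\overline{\omega}$ plays no role: the only properties of $\overline{\nabla}$ that enter are that it is a linear connection (additivity, Leibniz rule, $\mathfrak{F}(\overline{M})$-linearity in the differentiating slot) and that it is torsionless, both of which hold by Proposition \ref{Weyluniq}. Throughout, the tangential/normal splitting $T_{p}\overline{M}=T_{p}M\oplus T_{p}M^{\perp}$ is the one determined by $\overline{g}$, which is well defined precisely because $g=i^{*}(\overline{g})$ is assumed non-degenerate in the definition of Weyl submanifold; recall also that the resulting decomposition does not change within the conformal class.

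First I would prove symmetry. Let $X,Y\in\mathfrak{F}(M)$ with extensions $\overline{X},\overline{Y}$ to $\overline{M}$. Since $\overline{\nabla}$ is torsionless,
\[
\overline{\nabla}_{\overline{X}}\overline{Y}-\overline{\nabla}_{\overline{Y}}\overline{X}=[\overline{X},\overline{Y}].
\]
Restricting to $M$ and using that $[\overline{X},\overline{Y}]$ restricts to $[X,Y]$, which is a vector field tangent to $M$, the right-hand side has vanishing normal component. Applying $(\cdot)^{\perp}$ to the left-hand side then yields $\alpha(X,Y)-\alpha(Y,X)=0$.

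Next, $\mathfrak{F}(M)$-linearity in the first argument is immediate: additivity follows from additivity of $\overline{\nabla}$, and for $h\in\mathfrak{F}(M)$ (extended arbitrarily to $\overline{M}$),
\[
\alpha(hX,Y)=(\overline{\nabla}_{h\overline{X}}\overline{Y})^{\perp}=(h\,\overline{\nabla}_{\overline{X}}\overline{Y})^{\perp}=h\,(\overline{\nabla}_{\overline{X}}\overline{Y})^{\perp}=h\,\alpha(X,Y),
\]
the last equality using $\mathfrak{F}(M)$-linearity of $(\cdot)^{\perp}$. For the second argument, additivity is again clear, and the Leibniz rule gives $\overline{\nabla}_{\overline{X}}(h\overline{Y})=\overline{X}(h)\overline{Y}+h\,\overline{\nabla}_{\overline{X}}\overline{Y}$; restricted to $M$ the term $X(h)Y$ is tangent to $M$ and hence drops out under $(\cdot)^{\perp}$, so $\alpha(X,hY)=h\,\alpha(X,Y)$. (Equivalently, linearity in the second slot follows from linearity in the first together with the symmetry just established.) As in the Riemannian case, one then remarks that $\mathfrak{F}(M)$-linearity forces $\alpha(X,Y)|_{p}$ to depend only on $X_{p}$ and $Y_{p}$, so that $\alpha$ is a well-defined tensor field, confirming a posteriori the independence of the extensions.

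I expect no genuine obstacle: the only point that needs care is the one flagged above, namely checking that none of the Weylian modifications — in particular the $\overline{\omega}\otimes\overline{g}$ term in the compatibility condition \eqref{weylcompatibility} — enters the computation, so that the proof is word for word the Riemannian one. Since $\alpha$ is manufactured purely out of the connection $\overline{\nabla}$, and $\overline{\nabla}$ is an ordinary torsionless linear connection, this is automatic.
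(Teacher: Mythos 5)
Your proof is correct and is essentially the paper's own argument: the paper simply notes the result is obtained exactly as in Riemannian geometry, which is precisely what you carry out (symmetry from torsion-freeness plus tangency of the bracket, tensoriality from the Leibniz rule with the tangential term killed by the normal projection), correctly observing that the Weyl one-form never enters. Only a cosmetic slip: the vector fields should be written $X,Y\in\mathfrak{X}(M)$ rather than $\mathfrak{F}(M)$.
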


From now on we shall consider only hypersurfaces. In this case, we can define
a unit normal vector field $\eta$, which, at least locally, is unique up to a
sign. We define the \textit{scalar second fundamental form }$l$ as given by%

\begin{align*}
l:TM\times TM  &  \mapsto\mathfrak{F}(M)\\
(X,Y)  &  \mapsto\overline{g}(\alpha(X,Y),\eta)
\end{align*}
We note that although the choice of the unit normal field $\eta$ depends on a
particular element of $(\overline{M},[\overline{g}],[\overline{\omega}])$, the
definition of $l$ does not.

Now from the last proposition it follows that $l$ is symmetric and
$\mathfrak{F}(M)$-linear, i.e., $l$ is a symmetric $(0,2)$-tensor field on
$M$. \ Following a procedure entirely analogous to what is done in Riemannian
geometry, we obtain the Gauss-Codazzi equations for hypersurfaces. Thus, \ if
$X$,$Y$ and $Z$ are vector fileds on $M,$ Gauss' equation takes the form
\begin{equation}
\overline{g}(\overline{R}(X,Y)Z,W)=\overline{g}(R(X,Y)Z,W)+\overline{g}%
(\alpha(X,Z),\alpha(Y,W))-\overline{g}(\alpha(Y,Z),\alpha(X,W)).
\label{Gausseq}%
\end{equation}

If $\xi$ is a unit field normal to $M$, then Codazzi's equation reads
\begin{equation}
\overline{g}(\overline{R}(X,Y)Z,\xi)=\epsilon((\nabla_{X}l)(Y,Z)-(\nabla
_{Y}l)(X,Z)+\frac{1}{2}(\omega(Y)l(X,Z)-\omega(X)l(Y,Z))). \label{codazzieq2}%
\end{equation}
Where $\epsilon\doteq\overline{g}(\xi,\xi)=\pm1$ and the sign depends on
whether the restriction of $\overline{g}$ to each $T_{p}M^{\perp}$ is positive
or negative definite.

Let us now have a look at the Bianchi identities in Weyl geometry, as they
will be useful in our investigation of the embedding problem.

\subsection{Bianchi identities}

We start by writing down the symmetries of the Riemann curvature tensor
$\mathcal{R}$ defined on an $n$-dimensional Weyl manifold. First of all, in
order to clarify notation, we remark that in this paper we adopt the following
convention for the curvature tensor:
\[
R^{\rho}{}_{\sigma\beta\alpha}=\partial_{\alpha}\Gamma_{\beta\sigma}^{\rho
}-\partial_{\beta}\Gamma_{\alpha\sigma}^{\rho}+\Gamma_{\beta\sigma}^{\gamma
}\Gamma_{\alpha\gamma}^{\rho}-\Gamma_{\alpha\sigma}^{\gamma}\Gamma
_{\beta\gamma}^{\rho}.
\]

In terms of the components of $\mathcal{R}$ in a coordinate basis it is easy
to see that for any connection the following identity holds:
\[
R^{\rho}{}_{\mu\nu\alpha}=-R^{\rho}{}_{\mu\alpha\nu}.
\]
Moreover, if the connection is torsionless we also have the Bianchi
identities
\begin{align}
R^{\rho}{}_{\mu\nu\alpha}+R^{\rho}{}_{\alpha\mu\nu}+R^{\rho}{}_{\nu\alpha\mu}
&  =0,\label{biachi1}\\
R^{\rho}{}_{\mu\nu\alpha;\lambda}+R^{\rho}{}_{\mu\lambda\nu;\alpha}+R^{\rho}%
{}_{\mu\alpha\lambda;\nu}  &  =0, \label{bianchi2}%
\end{align}
where the semicolon denotes covariant differentiation.

We now look for a contracted version of the Bianchi identities. In particular,
we want to get a geometric identity for $g^{\alpha\beta}\nabla_{\alpha}%
^{S}G_{\beta\sigma}$, where the upper index $S$ stands for the "symmetric
part". Before doing this we need one more identity, which comes from looking
at the following expression for the Riemann tensor
\begin{equation}
R^{\rho}{}_{\sigma\beta\alpha}={}^{\circ}R^{\rho}{}_{\sigma\beta\alpha
}+g_{\sigma\lbrack\beta}{}^{\circ}\nabla_{\alpha]}\omega^{\rho}-\delta
_{\sigma}^{\rho}{}^{\circ}\nabla_{\lbrack\alpha}\omega_{\beta]}-\delta
_{\lbrack\beta}^{\rho}{}^{\circ}\nabla_{\alpha]}\omega_{\sigma}+\frac{1}%
{2}\delta_{\lbrack\alpha}^{\rho}\omega_{\beta]}\omega_{\sigma}-\frac{1}%
{2}g_{\sigma\lbrack\beta}\delta_{\alpha]}^{\rho}\omega^{\gamma}\omega_{\gamma
}-\frac{1}{2}g_{\sigma\lbrack\alpha}\omega_{\beta]}\omega^{\rho},
\label{riemannweyl}%
\end{equation}
where ${}^{\circ}$ denotes quantities computed with the Riemannian connection.
From this expression we can prove the identity
\[
R_{\lambda\sigma\beta\alpha}+R_{\sigma\lambda\beta\alpha}=2g_{\lambda\sigma
}F_{\beta\alpha},
\]
where $F_{\beta\alpha}=d\omega_{\alpha\beta}=\frac{1}{2}(\nabla_{\beta}%
\omega_{\alpha}-\nabla_{\alpha}\omega_{\beta})$. In order to compute
$g^{\alpha\beta}\nabla_{\alpha}^{S}G_{\beta\sigma}$ we can first compute both
divergences, which will give us the final result. Using all the previous
identities, it is not difficult to see that we are led to the following:
\begin{equation}
g^{\mu\lambda}\nabla_{\lambda}^{S}G_{\nu\mu}=\frac{n-2}{2}g^{\mu\lambda}%
\nabla_{\lambda}F_{\nu\mu}. \label{contractedbianchi}%
\end{equation}

\section{The embedding problem}

We now turn to the problem of existence of isometric embeddings of Weyl
manifolds. In particular, we are interested in studying possible extensions of
the Campell-Magaard-like theorems in the context of Weyl geometry. First of
all, we shall define what we understand by an\textit{ isometric embedding } in
this context.

\begin{defn}
An \textit{isometric immersion} $\phi:M\mapsto\tilde{M}$ between two Weyl
manifolds $(M,g,\omega)$ and $(\tilde{M},\tilde{g},\tilde{\omega})$ is a
smooth mapping satisfying:
\begin{align*}
1)  &  d\phi_{p}\;\;is\;\;injective\;\;\forall p\in M\\
2)  &  \phi^{\ast}(\tilde{g})=g\\
3)  &  \phi^{\ast}(\tilde{\omega})=\omega.
\end{align*}
If, furthermore, $\phi$ is one-to-one and the induced map $M\mapsto\phi(M)$ is
an homeomorphism, where $\phi(M)\subset\tilde{M}$ is seen with the induced
topology, then we say that $\phi$ is an \textit{isometric embedding}. Also, we
shall say that $\phi$ is a \textit{local isometric embedding at} $p\in M$ if
there is a neighborhood of $p$ where $\phi$ is an embedding .
\end{defn}

An important result we shall use when studying Campbell-Magaard-like theorems
is the following theorem:

\begin{thm}
\label{thm1} Let $(M^{n},g,\omega)$ and $(\tilde{M}^{n+1},\tilde{g}%
,\tilde{\omega})$ be Weyl manifolds, and $(U,\mu)$ a coordinate system around
$p\in M^{n}$. Then $(M,g,\omega)$ has a local analytic isometric embedding in
$(\tilde{M}^{n+1},\tilde{g},\tilde{\omega})$ around $p$ iff there are analytic
functions $\overline{g}_{ik}$, $\overline{\psi}$, $\overline{\omega}_{k}$ and
$\tilde{\omega}_{n+1}$, with $i,k=1,\dots,n$, defined on some open set
$D\subset\mu(U)\times\mathbb{R}$ containing $(x_{p}^{1},\dots,x_{p}^{n},0)$
satisfying the following conditions
\[
\overline{g}_{ik}(x^{1},\dots,x^{n},0)=g_{ik}(x^{1},\dots,x^{n})
\]%
\[
\overline{\omega}_{k}(x^{1},\dots,x^{n},0)=\omega_{k}(x^{1},\dots,x^{n})
\]
on some open set $A\subset\mu(U)$, and
\begin{equation}
\overline{g}_{ik}=\overline{g}_{ki} \label{immercond2}%
\end{equation}%
\begin{equation}
\det(\overline{g}_{ik})\neq0 \label{immercond3}%
\end{equation}%
\begin{equation}
\overline{\psi}\neq0 \label{immercond4}%
\end{equation}
on $D$, and such that on some open set $V\subset\tilde{M}^{n+1}$, the metric
$\tilde{g}$ and the $1-$form $\tilde{\omega}$ can be written in coordinates
as
\[
\tilde{g}=\overline{g}_{ik}dx^{i}\otimes dx^{k}+\epsilon\overline{\psi}%
^{2}dx^{n+1}\otimes dx^{n+1}%
\]%
\[
\tilde{\omega}=\overline{\omega}_{k}dx^{k}+\tilde{\omega}_{n+1}dx^{n+1}%
\]

where $\epsilon=\pm1$.
\end{thm}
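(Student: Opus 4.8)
The plan is to prove the two implications separately. For the ``if'' direction, assume we are given the analytic functions $\overline g_{ik}$, $\overline\psi$, $\overline\omega_k$, $\tilde\omega_{n+1}$ on $D$ with the stated properties, together with a coordinate chart $\chi=(x^1,\dots,x^{n+1})\colon V\to D$ on $\tilde M^{n+1}$ in which $\tilde g$ and $\tilde\omega$ take the displayed block form; note that (\ref{immercond2}) makes this $\tilde g$ symmetric and (\ref{immercond3})--(\ref{immercond4}) make $\det\tilde g=\epsilon\,\overline\psi^{\,2}\det(\overline g_{ik})\neq0$, consistent with $(\tilde M,\tilde g,\tilde\omega)$ being a Weyl manifold. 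Then the map $\phi\colon q\mapsto\chi^{-1}(\mu(q),0)$, defined for $q$ in the open set $\mu^{-1}(A)\ni p$, is analytic with injective differential, and the restriction conditions together with the block form give $\phi^{\ast}\tilde g=g_{ik}\,dx^{i}\otimes dx^{k}=g$ and $\phi^{\ast}\tilde\omega=\omega_k\,dx^{k}=\omega$; hence $\phi$ is a local analytic isometric embedding of $(M,g,\omega)$ around $p$. This direction is purely formal.

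For the ``only if'' direction the content is the construction, near $\phi(p)$, of coordinates on $\tilde M$ adapted to the hypersurface in which $\tilde g$ is block-diagonal. Identify $M$ near $p$ with $\phi(M)\subset\tilde M$. Since $\phi(M)$ is an embedded analytic hypersurface, one can choose analytic coordinates $z^{1},\dots,z^{n+1}$ around $\phi(p)$ with $\phi(M)=\{z^{n+1}=0\}$, and, after composing the $z^{1},\dots,z^{n}$ with an analytic diffeomorphism, with $z^{i}|_{\phi(M)}=\mu^{i}$. Because $M$ is a non-degenerate hypersurface, the conormal is non-null, i.e.\ $\tilde g^{\,n+1,n+1}=\tilde g(dz^{n+1},dz^{n+1})\neq0$ near $\phi(p)$, so $N:=(\tilde g^{\,n+1,n+1})^{-1}\tilde g^{\,a,n+1}\,\partial_{z^{a}}$ is a well-defined analytic vector field with $N(z^{n+1})=1$ that is $\tilde g$-orthogonal to every level set $\{z^{n+1}=\mathrm{const}\}$. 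Flowing along $N$ yields new analytic coordinates: put $x^{n+1}=z^{n+1}$ and let $x^{1},\dots,x^{n}$ be the values of $z^{1},\dots,z^{n}$ at the point where the $N$-orbit through a given point meets $\{z^{n+1}=0\}$. In these coordinates $\partial_{x^{n+1}}=N$, so the cross terms $\tilde g(\partial_{x^{n+1}},\partial_{x^{i}})$ vanish, while on $M$ (flow time zero) one has $x^{i}=\mu^{i}$, $x^{n+1}=0$.

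It remains to read off the claimed form. Set $\overline g_{ik}=\tilde g(\partial_{x^{i}},\partial_{x^{k}})$, $\overline\omega_k=\tilde\omega(\partial_{x^{k}})$, $\tilde\omega_{n+1}=\tilde\omega(\partial_{x^{n+1}})$; symmetry of $\tilde g$ gives (\ref{immercond2}), and restricting to $\{x^{n+1}=0\}$ and using $\phi^{\ast}\tilde g=g$, $\phi^{\ast}\tilde\omega=\omega$ gives $\overline g_{ik}(\cdot,0)=g_{ik}$ and $\overline\omega_k(\cdot,0)=\omega_k$, whence (\ref{immercond3}) holds after shrinking $D$ since $g_{ik}$ is non-degenerate. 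The transverse component is $\tilde g(N,N)=(\tilde g^{\,n+1,n+1})^{-1}$, which is nowhere zero near $\phi(p)$ and of constant sign $\epsilon:=\mathrm{sign}\,\tilde g^{\,n+1,n+1}$, so it equals $\epsilon\,\overline\psi^{\,2}$ with $\overline\psi=|\tilde g^{\,n+1,n+1}|^{-1/2}$ analytic and nonzero, giving (\ref{immercond4}); altogether $\tilde g=\overline g_{ik}\,dx^{i}\otimes dx^{k}+\epsilon\,\overline\psi^{\,2}\,dx^{n+1}\otimes dx^{n+1}$ and $\tilde\omega=\overline\omega_k\,dx^{k}+\tilde\omega_{n+1}\,dx^{n+1}$ on the chart, as required.

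I expect the one genuinely delicate point to be the first step of the ``only if'' direction: recognizing that, unlike in the Riemannian Campbell--Magaard proof, one cannot build these adapted coordinates from Weyl geodesics normal to $M$. Weyl parallel transport rescales norms by the factor $e^{\int\omega(\gamma^{\prime})}$, so the norm of the normal geodesic field varies from orbit to orbit, Gauss's lemma fails, and such coordinates are not block-diagonal (this is exactly why the statement must allow a non-constant $\overline\psi$ rather than $\pm1$). Using instead the orthogonal trajectories of an arbitrary adapted foliation --- a construction that depends only on the conformal class of $\tilde g$, and hence is natural from the Weylian viewpoint --- removes this obstacle, and the rest is routine verification.
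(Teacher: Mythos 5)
Your proof is correct, and in fact the paper states Theorem \ref{thm1} without giving any proof at all (it is quoted as the Weyl analogue of Magaard's embedding criterion from the Riemannian Campbell--Magaard literature), so your argument supplies what the text leaves implicit, by essentially the standard route. The ``if'' direction is, as you say, purely formal. In the ``only if'' direction, your adapted coordinates built from the orthogonal trajectories of the foliation $\{z^{n+1}=\mathrm{const}\}$ are sound: non-degeneracy of the induced metric $g$ is exactly the statement that the conormal is non-null on $\phi(M)$, so $\tilde g^{\,n+1,n+1}\neq0$ there and, by continuity, nearby; $N$ is proportional to $\mathrm{grad}\,z^{n+1}$, so in the new chart $\partial_{x^{n+1}}=N$ is $\tilde g$-orthogonal to each leaf and the cross terms vanish; $\tilde g(N,N)=(\tilde g^{\,n+1,n+1})^{-1}$ is nonvanishing of constant sign $\epsilon$ and defines the analytic $\overline{\psi}$; and the conditions on $\tilde\omega$ are automatic, being nothing more than the component decomposition of a 1-form in the chart combined with $\phi^{\ast}\tilde\omega=\omega$, which is the only genuinely Weylian ingredient of the statement. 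One quibble with your closing remark: the presence of a non-constant $\overline{\psi}$ is not forced by the failure of the Gauss lemma for the Weyl connection. The normal form concerns only the components of $\tilde g$ and $\tilde\omega$, so you could run Gaussian normal coordinates for the Levi--Civita connection of the representative $\tilde g$ and achieve $\overline{\psi}\equiv1$; the general $\overline{\psi}$ already appears in the Riemannian Magaard lemma and is there to provide freedom when the criterion is used in the constructive direction (where $\overline{\psi}$ is prescribed as arbitrary nonvanishing analytic data for the embedding PDEs). This remark is inessential and does not affect the validity of your proof.
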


At first sight a natural extension of the Campbell-Magaard theorem
\cite{dahia1} in the context of Weyl geometry seems to be to prove the
existence of a local analytic isometric embedding of an arbitrary Weyl
manifold $(M^{n},g,\omega)$ in an $(n+1)$-dimensional Weyl manifold
$(\tilde{M}^{n+1},\tilde{g},\tilde{\omega})$ satisfying $\tilde{R}%
_{\alpha\beta}=0$ around some arbitrary point $p\in M$. This turns out to be a
simple extension which can be treated in complete analogy to \cite{dahia1}
after making some considerations. First, note that $\tilde{R}_{\alpha\beta}=0$
implies that both its symmetric and antisymmetric parts of $\tilde{R}%
_{\alpha\beta}$ \ must vanish. However, we already know that for an
$n-$dimensional Weyl manifold ${}$\ we have ${}^{A}\tilde{R}_{\alpha\beta
}=\frac{n}{2}F_{\alpha\beta}$. Therefore, this condition implies
$F_{\alpha\beta}=0$, which is locally equivalent to setting $\omega=d\phi$,
for some function $\phi$. In other words, in this case $(\tilde{M},\tilde
{g},\tilde{\omega})$ gives an integrable Weyl structue. From this, we see that
if $(M,g,\omega)$ is non-integrable, then it does not exist any isometric
embedding of $(M,g,\omega)$ in Ricci-flat manifolds, irrespective of the
codimension considered. Thus, let us first consider integrable Weyl manifolds
and look for analytic isometric embeddings of an integrable Weyl manifold
$(M^{n},g,\phi)$ in a Ricci-flat integrable Weyl manifold $(\tilde{M}%
^{n+1},\tilde{g},\tilde{\phi})$. We now proceed to set up the notation that
will be used throughout this paper.

Henceforth we shall consider $\tilde{M}=M\times\mathbb{R}$, a local chart in
$M$ defined in a neighbourhood of $p$ with coordinates $(x^{1},\dots,x^{n})$,
while in the product structure we have a coordinate system around $(p,0)$ with
coordinates $(x^{1},\dots,x^{n},y)$, where $y$ denotes the coordinate in
$\mathbb{R}$. In this coordinate system we write%

\[%
\begin{split}
\tilde{g}  &  =\overline{g}_{ik}dx^{i}\otimes dx^{k}+\epsilon\overline{\psi
}^{2}dy\otimes dy,\\
\tilde{\omega}  &  =\overline{\omega}_{i}dx^{i}+\tilde{\omega}_{n+1}dy,
\end{split}
\]
and consider the unit normal field given by
\[
\xi=\frac{1}{|\tilde{g}(\partial_{n+1},\partial_{n+1})|^{\frac{1}{2}}}%
\partial_{n+1}.
\]
From Gauss' equation we obtain
\begin{equation}
\tilde{R}_{likj}=\overline{R}_{likj}+\epsilon(l_{ji}l_{kl}-l_{ki}l_{jl}),
\label{riccigauss}%
\end{equation}
with
\begin{equation}
l_{ji}=\epsilon\overline{\psi}\tilde{\Gamma}_{ji}^{n+1}=-\frac{1}%
{2\overline{\psi}}\frac{\partial}{\partial y}\overline{g}_{ij}+\frac
{1}{2\overline{\psi}}\overline{g}_{ij}\tilde{\omega}_{n+1}.
\label{secondform1}%
\end{equation}
Also, from the Gauss-Codazzi equations and some explicit expressions for the
components of the connection, we arrive at the following equations for the
components of the Ricci tensor:
\begin{align}
\label{riccicomponents}%
\begin{split}
\tilde{R}_{ij}  &  =\overline{R}_{ij}+\epsilon\overline{g}^{kl}(l_{ij}%
l_{kl}-2l_{ki}l_{jl})+\frac{1}{\overline{\psi}}\overline{\nabla}_{j}%
\overline{\nabla}_{i}\overline{\psi}-\frac{\epsilon}{\overline{\psi}}%
\partial_{n+1}l_{ij}-\frac{1}{2}\overline{\nabla}_{j}\overline{\omega}_{i}\\
&  +\frac{1}{4}\overline{\omega}_{i}\overline{\omega}_{j}-\frac{1}%
{2\overline{\psi}}(\overline{\omega}_{i}\partial_{j}\overline{\psi}%
+\overline{\omega}_{j}\partial_{i}\overline{\psi}-\epsilon\tilde{\omega}%
_{n+1}l_{ji})\\
\tilde{R}_{(n+1)i}  &  =\epsilon\overline{\psi}\overline{g}^{kl}%
(\overline{\nabla}_{k}l_{il}-\overline{\nabla}_{i}l_{kl})+\epsilon
\frac{\overline{\psi}}{2}\overline{g}^{kl}(\overline{\omega}_{i}%
l_{kl}-\overline{\omega}_{k}l_{il})-\frac{1}{2}(\partial_{i}\tilde{\omega
}_{n+1}-\partial_{n+1}\overline{\omega}_{i})\\
\tilde{R}_{i(n+1)}  &  =\epsilon\overline{\psi}\overline{g}^{kl}%
(\overline{\nabla}_{k}l_{il}-\overline{\nabla}_{i}l_{kl})+\epsilon
\frac{\overline{\psi}}{2}\overline{g}^{kl}(\overline{\omega}_{i}%
l_{kl}-\overline{\omega}_{k}l_{il})+\frac{n}{2}(\partial_{i}\tilde{\omega
}_{n+1}-\partial_{n+1}\overline{\omega}_{i})\\
\tilde{R}_{(n+1)(n+1)}  &  =-\overline{\psi}^{2}\overline{g}^{jk}\overline
{g}^{iu}l_{uk}l_{ji}+\frac{1}{2}\overline{\psi}\tilde{\omega}_{n+1}l_{i}%
^{i}-\overline{\psi}\overline{g}^{iu}\partial_{n+1}l_{ui}+\epsilon
\overline{\psi}\overline{\nabla}_{i}\overline{\nabla}^{i}\overline{\psi}-\\
&  \frac{\epsilon}{2}\overline{\psi}^{2}\overline{\nabla}_{i}\overline{\omega
}^{i}-\frac{\epsilon}{4}\overline{\psi}^{2}\overline{\omega}^{i}%
\overline{\omega}_{i}%
\end{split}
\end{align}
Our next step is to compute the component $\tilde{G}_{n+1}^{n+1}$ of the
Einstein tensor to obtain
\begin{equation}
\tilde{G}_{n+1}^{n+1}=-\frac{1}{2}(\overline{R}+\epsilon\overline{g}%
^{ij}\overline{g}^{kl}(l_{ij}l_{kl}-l_{ki}l_{jl})). \label{G^{n+1}_{n+1}}%
\end{equation}

\subsection{The Weyl integrable case}

In this section we will discuss the embedding problem for Weyl integrable
manifolds. It is worth noticing that, in this case, there is a stronger
analogy with some Riemannan problems already studied in contact with General
Relativity. This is because if, for one particular member of the class
$(\tilde{M}^{n+1},[\tilde{g}],[\tilde{\phi}])$, we split the Ricci tensor into
its \textit{Riemannian part} and the \textit{extra terms}, then the Ricci-flat
condition becomes equivalent to the Einstein field equations with a scalar
field as a source. In the Riemannian framework, embeddings in such structures
have been studied by Ponce de Leon, who constructed explicit embeddings of
general vacuum solutions of $n$-dimensional general relativity (with a
possible presence of the cosmological constant) into $(n+1)$-Semi-Riemannian
manifolds sourced by a scalar field \cite{Ponce de Leon}. We should also
mention that embeddings in such structures where also treated by Anderson
\textit{et al.}, in which they worked out one of the known extensions of the
Campell-Magaard theorem \cite{CM-Scalar}. Even though these results are
clearly related to the problem we intend to study here, there are important
differences, one of them and maybe the main one, is that, since in both
\cite{Ponce de Leon} and \cite{CM-Scalar} the underlying structure is
Riemannian, the results presented there would not guarantee the embedding of a
whole Weyl integrable structure $(M^{n},[g],[\phi])$ in a Ricci-flat Weyl
integrable structure $(\tilde{M}^{n+1},[\tilde{g}],[\tilde{\phi}])$, as will
be shown in this section. Another difference with respect to \cite{Ponce de
Leon} is that there it is shown that, given a solution of the vacuum Einstein
field equations in $n$-dimensions, then it is possible to construct embeddings
for such a solutions in $(n+1)$-dimensional manifolds sourced by scalar
fields. In contrast, we will not impose any restriction, besides the
regularity assumptions, for the initial data (it does not need to solve any
field equations on the original manifold). In this way, we can make an
interesting contact with these results known in Riemannian geometry, while
having some important differences with them.

In order to start with the discussion of the present embedding problem, note
that in the case where $(\tilde{M},\tilde{g},\tilde{\omega})$ is integrable,
that is, when $\tilde{\omega}=d\tilde{\phi}$, the expressions in
(\ref{riccicomponents}) are simplified. In fact, as we have already seen, the
Ricci tensor turns out to be symmetric in this case, and from
(\ref{contractedbianchi}) we obtain
\begin{equation}
\label{integrabledivergence}\tilde{g}^{\alpha\beta}\tilde{\nabla}_{\alpha
}\tilde{G}_{\nu\beta}=0.
\end{equation}

From the above, we see that the natural approach to the problem is to follow
the same procedure adopted in \cite{dahia1}, which consists in considering the
\textit{evolution} equations $\tilde{R}_{ij}=0$ in a neighborhood of
$0\in\mathbb{R}^{n+1}$, as well as the \textit{constraint} equations
$\tilde{R}_{i(n+1)}=0$ and $\tilde{G}_{n+1}^{n+1}=0$ on the hypersurface
$\Sigma_{0}$ given by $y=0$. Then, the evolution equations together with the
identity (\ref{integrabledivergence}), guarantee that we can propagate the
constraint equations in a neighborhood of the origin of $\mathbb{R}^{n+1}$. In
this scheme, we just consider $\tilde{\phi}$ as being some given analytic
function in a neighborhood of the origin satisfying $\tilde{\phi}%
(x,0)=\phi(x)$. Proceeding in this way, we find that the problem is totally
analogous to the one investigated in \cite{dahia1}, immediately leading to the
following statement:

\begin{thm}
Any analytic integrable n-dimensional Weyl manifold $(M^{n},g,\phi)$ admits a
local analytic isometric embedding around any point $p\in M$ in an analytic
Ricci-flat integrable Weyl manifold $(\tilde{M}^{n+1},\tilde{g},\tilde{\phi})$.
\end{thm}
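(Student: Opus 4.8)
The plan is to reduce everything to Theorem~\ref{thm1}: I would produce, on a neighbourhood of $(x^{1}_{p},\dots,x^{n}_{p},0)$, analytic functions $\overline{g}_{ik}$, $\overline{\psi}$, $\overline{\omega}_{k}$, $\tilde{\omega}_{n+1}$ satisfying its hypotheses, using the ``evolution plus constraints'' scheme of \cite{dahia1}. First I fix an analytic chart $(x^{1},\dots,x^{n})$ around $p$ and set $\tilde{M}=M\times\mathbb{R}$ with extra coordinate $y$. Then I choose once and for all a nowhere-vanishing analytic function $\overline{\psi}$ (for instance $\overline{\psi}\equiv1$) and an analytic extension $\tilde{\phi}$ of $\phi$ with $\tilde{\phi}(x,0)=\phi(x)$, and put $\overline{\omega}_{i}=\partial_{i}\tilde{\phi}$, $\tilde{\omega}_{n+1}=\partial_{n+1}\tilde{\phi}$; this forces the bulk one-form $\tilde{\omega}=d\tilde{\phi}$ to be closed, makes its restriction to $\Sigma_{0}=\{y=0\}$ equal to $d\phi=\omega$, and arranges both condition (\ref{immercond4}) and the matching $\overline{\omega}_{k}(x,0)=\omega_{k}(x)$ demanded by Theorem~\ref{thm1}. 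The only genuine unknowns left are the $\tfrac12 n(n+1)$ metric components $\overline{g}_{ij}$.

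Next I would treat $\tilde{R}_{ij}=0$ as the evolution system. By (\ref{secondform1}) the scalar second fundamental form $l_{ij}$ is a first-order expression in $\partial_{y}\overline{g}_{ij}$, so the equations $\tilde{R}_{ij}=0$ in (\ref{riccicomponents}) constitute a second-order quasilinear system for $\overline{g}_{ij}$ whose principal part is $-\frac{\epsilon}{\overline{\psi}}\partial_{n+1}l_{ij}=\frac{\epsilon}{2\overline{\psi}^{2}}\partial_{y}^{2}\overline{g}_{ij}+\cdots$; since $\overline{\psi}\neq0$ one can solve for $\partial_{y}^{2}\overline{g}_{ij}$ and put the system in Cauchy--Kovalevskaya normal form. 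It then has a unique analytic solution near $(p,0)$ once one prescribes $\overline{g}_{ij}(x,0)=g_{ij}(x)$ together with $\partial_{y}\overline{g}_{ij}(x,0)$, the latter being equivalent to choosing a symmetric tensor $l_{ij}(x,0)$ on $\Sigma_{0}$. Symmetry $\overline{g}_{ik}=\overline{g}_{ki}$ is inherited from the Cauchy data and $\det(\overline{g}_{ik})\neq0$ persists on a possibly smaller neighbourhood by continuity, so (\ref{immercond2}) and (\ref{immercond3}) will hold as well.

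The substance of the argument is the choice of $l_{ij}(x,0)$ so that the constraints $\tilde{R}_{i(n+1)}=0$ and $\tilde{G}^{n+1}_{n+1}=0$ hold on $\Sigma_{0}$, together with the check that these then propagate off $\Sigma_{0}$. By (\ref{G^{n+1}_{n+1}}) the Hamiltonian constraint is the quadratic equation $\overline{g}^{ij}\overline{g}^{kl}(l_{ij}l_{kl}-l_{ki}l_{jl})=-\epsilon\overline{R}$ in $l_{ij}$, and by (\ref{riccicomponents}) the momentum constraint reduces to $\overline{g}^{kl}(\overline{\nabla}_{k}l_{il}-\overline{\nabla}_{i}l_{kl})+\tfrac12\overline{g}^{kl}(\overline{\omega}_{i}l_{kl}-\overline{\omega}_{k}l_{il})=0$, the term $\partial_{i}\tilde{\omega}_{n+1}-\partial_{n+1}\overline{\omega}_{i}$ having dropped out because $\tilde{\omega}$ is closed. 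Both have exactly the structure of the constraints solved in \cite{dahia1,Magaard}, the Weylian contributions entering only as lower-order terms with known analytic coefficients, so I would invoke Magaard's device verbatim: pass to coordinates adapted to $g$ on $\Sigma_{0}$, set a convenient subfamily of the $l_{ij}$ to zero, solve the Hamiltonian constraint algebraically for one of the remaining components, and solve the momentum constraints as linear first-order equations for the others. For propagation, observe that in the chosen coordinates $\tilde{g}$ is block diagonal, so once $\tilde{R}_{ij}=0$ holds throughout one has $\tilde{G}^{i}{}_{j}=-\delta^{i}_{j}\tilde{G}^{n+1}_{n+1}$, $\tilde{G}^{n+1}_{n+1}=\tfrac12\epsilon\overline{\psi}^{-2}\tilde{R}_{(n+1)(n+1)}$ and $\tilde{G}_{i(n+1)}=\tilde{R}_{i(n+1)}$; feeding this into the contracted Bianchi identity (\ref{integrabledivergence}) turns $\tilde{\nabla}_{\alpha}\tilde{G}^{\alpha}{}_{n+1}=0$ and $\tilde{\nabla}_{\alpha}\tilde{G}^{\alpha}{}_{i}=0$ into a homogeneous linear first-order system in $y$ for the $n+1$ functions $(\tilde{G}^{n+1}_{n+1},\tilde{R}_{i(n+1)})$ with vanishing Cauchy data on $\Sigma_{0}$, whose only analytic solution is zero. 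Hence $\tilde{R}_{\alpha\beta}=0$ near $(p,0)$, and Theorem~\ref{thm1} delivers the asserted local analytic isometric embedding into the Ricci-flat integrable Weyl manifold $(\tilde{M}^{n+1},\tilde{g},\tilde{\phi})$.

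I expect the single real obstacle to be the simultaneous solution of the Hamiltonian and momentum constraints for $l_{ij}(x,0)$: this is precisely where Magaard's original device and the adapted coordinate choice are indispensable, and where one must watch out for non-degeneracy (some metric coefficient has to be arranged not to vanish, which is harmless locally). The reason the integrable case remains ``totally analogous'' to \cite{dahia1} is that the closedness of $\tilde{\omega}$ annihilates the $F$-type term in Codazzi's equation (\ref{codazzieq2}), so no obstruction beyond the Riemannian one is created; in the non-integrable case that term would survive, which is the source of the additional difficulties addressed in the remainder of the paper.
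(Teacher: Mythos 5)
Your proposal is correct and follows essentially the same route as the paper: treat $\tilde{\phi}$ (hence $\tilde{\omega}=d\tilde{\phi}$) and $\overline{\psi}$ as prescribed analytic data, take $\tilde{R}_{ij}=0$ as a Cauchy--Kovalevskaya evolution system for $\overline{g}_{ij}$, solve the constraints $\tilde{R}_{i(n+1)}=0$, $\tilde{G}^{n+1}{}_{n+1}=0$ on $\Sigma_{0}$ by Magaard's device as in \cite{dahia1}, propagate them with the integrable contracted Bianchi identity (\ref{integrabledivergence}), and conclude via Theorem~\ref{thm1}. The paper states this only as a sketch deferring to \cite{dahia1}; your write-up supplies the same steps in more detail, with no substantive deviation.
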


It is interesting to note that this result guarantees the existence of
isometric embeddings for \textit{Weyl manifolds}, not for a \textit{Weyl
structure} $(M,[g],[\omega])$. Indeed, in order to take into account the whole
\textit{Weyl structure} we need to show that for every element of
$(M,[g],[\omega])$ there is an isometric embedding of this element in an
element of some $(n+1)-$dimensional Weyl structure $(\tilde{M},[\tilde
{g}],[\tilde{\omega}])$. Since, as already remarked, when working with Weyl
manifolds all the relevant geometric (and physical) quantities are to be
defined on the whole class, it is of much more interest to look for an
embedding for the whole structure. We claim that we can show this from our
previous results. To do this, let us consider the following argument.

Suppose that a particular $n-$dimensional Weyl manifold $(M,g,\omega)$ admits
a local analytic isometric embedding into an $(n+1)-$dimensional Weyl manifold
$(\tilde{M},\tilde{g},\tilde{\omega}),$ and that this embedding has been
constructed following our previous prescription, namely, that the embedding is
just the inclusion. On the other hand, any other element of the class
$(M,[g],[\omega])$ can be written as $(M,e^{-h}g,\omega-dh)$ for some analytic
function $h$. The question is whether there is some analytic function $f$ on
$\tilde{M}$ such that, for this element of the class, there is a local
analytic isometric embedding into $(\tilde{M},e^{-f}\tilde{g},\tilde{\omega
}-df)$. By using the same set up we have developed, we define the function
$f(x,y)$ in a neighborhood of the point $p\in\tilde{M}$ (where we know the
isometric embedding exists) by:
\[
f(x,y)\doteq h(x)+y.
\]
We then get
\[%
\begin{split}
e^{-f(x,0)}\tilde{g}_{ij}(x,0)  &  =e^{-h(x)}g_{ij}(x),\\
\tilde{\omega}_{i}(x,0)-\partial_{i}f(x,0)  &  =\omega_{i}(x)-\partial
_{i}h(x),
\end{split}
\]
which gives us the isometry condition. Also, since the Ricci tensor is an
invariant of the class of Weyl manifolds, we have shown the following result.

\begin{thm}
Any analytic $n$-dimensional integrable Weyl structure $(M^{n},[g],[\phi])$
admits a local analytic isometric embedding in an $(n+1)$-dimensional
integrable Weyl structure $(\tilde{M}^{n+1},[\tilde{g}],[\tilde{\phi}])$ with
vanishing Ricci tensor.
\end{thm}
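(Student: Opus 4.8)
The plan is to deduce this statement from the previous theorem together with the gauge invariance of the Ricci tensor within a Weyl class, so that essentially no new analysis is required beyond the Cauchy--Kovalevskaya construction already carried out. The key observation is that the Riemannian embedding results cannot simply be quoted here, since they fix the ambient metric; what makes the Weylian statement work is precisely the extra freedom of rescaling the bulk by a transformation of the form \eqref{weyl group}.

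First I would fix one representative $(M^{n},g,\phi)$ of the class $(M^{n},[g],[\phi])$ and apply the previous theorem to obtain, around a given point $p$, a local analytic isometric embedding into a Ricci-flat integrable Weyl manifold $(\tilde{M}^{n+1},\tilde{g},\tilde{\phi})$; by the construction underlying that theorem we may take $\tilde{M}=M\times\mathbb{R}$ and the embedding to be the inclusion $i:M\hookrightarrow\tilde{M}$, $x\mapsto(x,0)$, with $\tilde{g}$ and $\tilde{\omega}=d\tilde{\phi}$ in the coordinate form of Theorem~\ref{thm1} and $\tilde{\phi}(x,0)=\phi(x)$. Next, given any other representative of the class on $M$, written as $(M,e^{-h}g,\phi-h)$ for an analytic function $h$, I would extend $h$ off the hypersurface by setting $f(x,y)\doteq h(x)+y$ and form the Weyl structure $(\tilde{M},e^{-f}\tilde{g},\tilde{\omega}-df)$. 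Since this lies in the class $(\tilde{M},[\tilde{g}],[\tilde{\omega}])$, and the Ricci tensor depends only on the Weyl connection and hence is invariant under \eqref{weyl group}, this new ambient structure is again Ricci-flat; and because $\tilde{\omega}-df=d(\tilde{\phi}-f)$ it is again integrable.

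It then remains to check that the same inclusion $i$ is an isometric embedding of $(M,e^{-h}g,\phi-h)$ into $(\tilde{M},e^{-f}\tilde{g},\tilde{\omega}-df)$. This is the routine bookkeeping: $di_{p}$ is injective and $i$ is a homeomorphism onto its image exactly as before, while $i^{*}(e^{-f}\tilde{g})=e^{-h}\,i^{*}\tilde{g}=e^{-h}g$ because $f\circ i=h$, and $i^{*}(\tilde{\omega}-df)=i^{*}\tilde{\omega}-d(f\circ i)=\omega-dh$, which is the one-form of the chosen representative (equivalently, in the integrable language, $i^{*}(d\tilde{\phi}-df)=d(\phi-h)$). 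Running this for every analytic $h$ shows that each element of $(M^{n},[g],[\phi])$ embeds in an element of the single Weyl structure $(\tilde{M}^{n+1},[\tilde{g}],[\tilde{\phi}])$, which is exactly what the statement asserts.

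The step I expect to carry the real content is not a computation at all but the conceptual point isolated above: one must recognise that the bulk Weyl gauge freedom is available and that vanishing Ricci is a property of the entire ambient class, not of a particular representative. Once $f$ is chosen to extend $h$ to the bulk, everything else is immediate from the previous theorem and from the invariance properties recorded earlier; there is no further PDE to solve, and in particular no need to re-invoke Cauchy--Kovalevskaya.
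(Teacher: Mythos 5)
Your proposal is correct and follows essentially the same route as the paper: one fixes a representative embedded by the previous theorem (with the embedding realized as the inclusion into $\tilde{M}=M\times\mathbb{R}$), extends the gauge function $h$ to the bulk via $f(x,y)\doteq h(x)+y$, checks the pullback conditions $e^{-f(x,0)}\tilde{g}_{ij}(x,0)=e^{-h(x)}g_{ij}(x)$ and $\tilde{\omega}_{i}(x,0)-\partial_{i}f(x,0)=\omega_{i}(x)-\partial_{i}h(x)$, and concludes by the class-invariance of the Ricci tensor (with integrability preserved since $\tilde{\omega}-df$ is exact). No substantive difference from the paper's argument.
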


We now turn our attention to the more general problem of embedding of Weyl
manifolds which are not necessarily integrable, dropping the condition of
Ricci-flatness. Thus, in the following sections, we shall weaken this latter condition.

\section{Embeddings in Weyl manifolds whose Ricci tensor has vanishing
symmetric part}

In this section we shall investigate the existence of a local isometric
embedding of an arbitrary Weyl manifold $(M^{n},g,\omega)$ around some point
$p\in M$ in a Weyl manifold $(\tilde{M}^{n+1},\tilde{g},\tilde{\omega})$ which
has $^{S}\tilde{R}_{\alpha\beta}=0$. This is the same as requiring that
$^{S}\tilde{G}_{\alpha\beta}=0$. From the identity
\[
\tilde{g}^{\alpha\beta}\tilde{\nabla}_{\alpha}^{S}\tilde{G}_{\nu\beta}%
=\frac{n-2}{2}\tilde{g}^{\alpha\beta}\tilde{\nabla}_{\alpha}F_{\nu\beta},
\]
we see that our requirement on $(\tilde{M},\tilde{g},\tilde{\omega})$ imposes
the condition
\begin{equation}
\tilde{g}^{\alpha\beta}\tilde{\nabla}_{\alpha}F_{\nu\beta}=0,
\label{condition1}%
\end{equation}
which must hold in a neighborhood of $p$. \ As we shall see, (\ref{condition1}%
)\ will impose further restrictions on $(\tilde{M},\tilde{g},\tilde{\omega})$.
To see this we shall need to make use of some geometric identities.

\begin{prop}
Suppose we have a semi-Riemannian manifold $M$ endowed with a torsionless
connection $\nabla$. Then, for any $T\in\mathfrak{X}_{2}^{0}(M)$ the following
identity holds:
\[
\nabla_{\nu}\nabla_{\mu}T_{\alpha\beta}-\nabla_{\mu}\nabla_{\nu}T_{\alpha
\beta}=-R^{\sigma}{}_{\alpha\mu\nu}T_{\sigma\beta}-R^{\sigma}{}_{\beta\mu\nu
}T_{\alpha\sigma}.
\]

\end{prop}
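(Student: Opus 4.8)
The plan is to prove the Ricci identity for a $(0,2)$-tensor directly from the definition of the covariant derivative, reducing everything to the already-known Ricci identity for scalars and $(0,1)$-tensors (one-forms), and then using the Leibniz rule. The key point is that for any torsionless connection the operator $[\nabla_\nu,\nabla_\mu]$ is a \emph{tensorial} (i.e.\ $\mathfrak{F}(M)$-linear, zeroth-order) operator on the tensor argument — this is the standard fact that the difference of the two second-covariant-derivative orderings has no derivative terms once torsion vanishes — so it suffices to check the identity on a basis of the relevant tensor bundle, or equivalently to verify it on decomposable tensors $T_{\alpha\beta} = P_\alpha Q_\beta$.

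First I would recall the Ricci identity for a one-form $P$: for a torsionless connection one has $\nabla_\nu \nabla_\mu P_\alpha - \nabla_\mu \nabla_\nu P_\alpha = -R^\sigma{}_{\alpha\mu\nu} P_\sigma$, which follows from the definition of the curvature tensor with the sign convention fixed earlier in the excerpt, exactly as in the semi-Riemannian case (the derivation uses only torsionlessness, not metric compatibility, so it is valid in the Weyl setting). Next I would write $T_{\alpha\beta} = P_\alpha Q_\beta$ and apply $\nabla_\mu$ twice via the Leibniz rule:
\[
\nabla_\mu(P_\alpha Q_\beta) = (\nabla_\mu P_\alpha)Q_\beta + P_\alpha(\nabla_\mu Q_\beta),
\]
then differentiate again, antisymmetrize in $\mu \leftrightarrow \nu$, and observe that the four ``mixed'' terms of the form $(\nabla_{(\nu}P_{|\alpha|})(\nabla_{\mu)}Q_\beta)$ cancel in pairs under the antisymmetrization. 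What survives is
\[
[\nabla_\nu,\nabla_\mu](P_\alpha Q_\beta) = Q_\beta[\nabla_\nu,\nabla_\mu]P_\alpha + P_\alpha[\nabla_\nu,\nabla_\mu]Q_\beta,
\]
and substituting the one-form identity into each term gives
\[
[\nabla_\nu,\nabla_\mu](P_\alpha Q_\beta) = -R^\sigma{}_{\alpha\mu\nu}P_\sigma Q_\beta - R^\sigma{}_{\beta\mu\nu}P_\alpha Q_\sigma = -R^\sigma{}_{\alpha\mu\nu}T_{\sigma\beta} - R^\sigma{}_{\beta\mu\nu}T_{\alpha\sigma}.
\]
Finally, since both sides are $\mathfrak{F}(M)$-linear in $T$ (the left side by the torsionless-connection argument, the right side manifestly) and every $(0,2)$-tensor is locally a finite sum of decomposable tensors $P_\alpha Q_\beta$ with coefficients in $\mathfrak{F}(M)$, the identity extends to arbitrary $T \in \mathfrak{X}^0_2(M)$.

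The only genuine subtlety — and the step I would be most careful about — is justifying that $[\nabla_\nu,\nabla_\mu]$ really is zeroth-order in $T$, i.e.\ that it descends to an $\mathfrak{F}(M)$-linear map; this is what licenses reducing to decomposable tensors. In a coordinate computation this is automatic because all second-derivative terms of the components of $T$ cancel by the symmetry of mixed partials and the symmetry $\Gamma^\rho_{\mu\nu} = \Gamma^\rho_{\nu\mu}$ (torsionlessness), which is where the hypothesis enters. I would note in passing that nothing in this argument uses the Weyl compatibility condition $\nabla g = \omega \otimes g$ — the statement is purely about a torsionless affine connection — so the proposition is stated in exactly the right generality and its proof is identical to the Riemannian one.
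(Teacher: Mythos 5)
Your proof is correct. The paper states this proposition without proof (it is the standard Ricci identity for a torsionless connection, quoted as a known fact), so there is no official argument to compare against; your route --- reduce to decomposable tensors, apply the one-form Ricci identity and the Leibniz rule, and note that the cross terms cancel under antisymmetrization --- is the standard one and is sound, with torsionlessness entering exactly where you say it does (the vanishing of the commutator on scalars, equivalently the cancellation of the $\Gamma^{\lambda}_{\nu\mu}\nabla_{\lambda}T_{\alpha\beta}$ and mixed-partial terms in coordinates). Your signs are also consistent with the paper's curvature convention $R^{\rho}{}_{\sigma\beta\alpha}=\partial_{\alpha}\Gamma^{\rho}_{\beta\sigma}-\partial_{\beta}\Gamma^{\rho}_{\alpha\sigma}+\Gamma^{\gamma}_{\beta\sigma}\Gamma^{\rho}_{\alpha\gamma}-\Gamma^{\gamma}_{\alpha\sigma}\Gamma^{\rho}_{\beta\gamma}$, as one can check against the identity $\tilde{\nabla}_{\lambda}\tilde{\nabla}_{\nu}\tilde{\omega}^{\lambda}=\tilde{R}^{\lambda}{}_{\sigma\nu\lambda}\tilde{\omega}^{\sigma}+\tilde{\nabla}_{\nu}\tilde{\nabla}_{\lambda}\tilde{\omega}^{\lambda}$ used later in the text. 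One minor simplification: the appeal to $\mathfrak{F}(M)$-linearity of $[\nabla_{\nu},\nabla_{\mu}]$ can be avoided altogether, since locally $T$ is already a finite sum of decomposable terms $(T_{\alpha\beta}\,dx^{\alpha})\otimes dx^{\beta}$, so additivity of both sides of the identity suffices; alternatively, the direct coordinate computation on $T_{\alpha\beta}$ itself proves the identity in two lines without any reduction.
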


A corollary of this proposition in the context of Weyl geometry is given by
the statement below.

\begin{coro}
Suppose we have a Weyl manifold $(M,g,\omega)$, endowed with its
Weyl-compatible connection $\nabla$, and let $F\doteq d\omega$. Then, for any
2-form $T$ on $M$ we have the identity
\[
g^{\nu\alpha}g^{\mu\beta}\nabla_{\nu}\nabla_{\mu}T_{\alpha\beta}%
=-R^{\sigma\beta}T_{\sigma\beta}+2F^{\sigma\beta}T_{\sigma\beta}%
\]

\end{coro}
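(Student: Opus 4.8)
The plan is to exploit the antisymmetry of $T$ twice and then to convert the surviving curvature contraction into a Ricci piece plus an $F$-piece by purely \emph{algebraic} curvature identities, so that the non-metricity $\nabla g=\omega\otimes g$ never has to be commuted past anything. Set $\mathcal{D}\doteq g^{\nu\alpha}g^{\mu\beta}\nabla_{\nu}\nabla_{\mu}T_{\alpha\beta}$. Relabelling the contracted dummy indices $(\nu,\alpha)\leftrightarrow(\mu,\beta)$ and using $T_{\beta\alpha}=-T_{\alpha\beta}$ gives $\mathcal{D}=-g^{\nu\alpha}g^{\mu\beta}\nabla_{\mu}\nabla_{\nu}T_{\alpha\beta}$, hence $2\mathcal{D}=g^{\nu\alpha}g^{\mu\beta}(\nabla_{\nu}\nabla_{\mu}-\nabla_{\mu}\nabla_{\nu})T_{\alpha\beta}$. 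This first step only renames summation indices in a scalar, so it is insensitive to $\nabla g\neq 0$.

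Next I would apply the preceding Proposition (the Ricci identity for $(0,2)$-tensors under a torsionless connection), which yields $2\mathcal{D}=-g^{\nu\alpha}g^{\mu\beta}\big(R^{\sigma}{}_{\alpha\mu\nu}T_{\sigma\beta}+R^{\sigma}{}_{\beta\mu\nu}T_{\alpha\sigma}\big)$. A second relabelling, $\alpha\leftrightarrow\beta$ together with $\mu\leftrightarrow\nu$, combined with the skew-symmetry $R^{\sigma}{}_{\alpha\nu\mu}=-R^{\sigma}{}_{\alpha\mu\nu}$ in the last pair and the antisymmetry of $T$, shows that the two terms on the right coincide, so $\mathcal{D}=-g^{\nu\alpha}g^{\mu\beta}g^{\sigma\rho}R_{\rho\alpha\mu\nu}T_{\sigma\beta}$.

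The technical heart is the reduction of $g^{\nu\alpha}R_{\rho\alpha\mu\nu}$. Here I would first apply the algebraic Bianchi identity (\ref{biachi1}) in the last three slots, $R_{\rho\alpha\mu\nu}=-R_{\rho\nu\alpha\mu}-R_{\rho\mu\nu\alpha}$; contracting with $g^{\nu\alpha}$ kills the last term by skew-symmetry in the final pair, leaving $g^{\nu\alpha}R_{\rho\alpha\mu\nu}=-g^{\nu\alpha}R_{\rho\nu\alpha\mu}$. Because the Weyl Riemann tensor is not skew in its first pair, I would then insert the identity $R_{\rho\nu\alpha\mu}+R_{\nu\rho\alpha\mu}=2g_{\rho\nu}F_{\alpha\mu}$ established above, turning $-g^{\nu\alpha}R_{\rho\nu\alpha\mu}$ into $g^{\nu\alpha}R_{\nu\rho\alpha\mu}-2F_{\rho\mu}$; the first term is now a genuine contraction of the first and third slots, i.e.\ the Ricci tensor $R_{\rho\mu}$ in the paper's convention. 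Thus $g^{\nu\alpha}R_{\rho\alpha\mu\nu}=R_{\rho\mu}-2F_{\rho\mu}$, and substituting back and raising the remaining indices with $g$ gives $\mathcal{D}=-R^{\sigma\beta}T_{\sigma\beta}+2F^{\sigma\beta}T_{\sigma\beta}$, as claimed.

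I expect the only delicate point to be the index bookkeeping in the last step: one must keep straight which pair of slots produces the Ricci tensor in the chosen curvature convention and, crucially, resist the Riemannian reflex of antisymmetrizing the first pair of the curvature tensor, using instead the substitute identity $R_{\rho\nu\alpha\mu}+R_{\nu\rho\alpha\mu}=2g_{\rho\nu}F_{\alpha\mu}$ — which is exactly what manufactures the extra term $2F^{\sigma\beta}T_{\sigma\beta}$ with no Riemannian counterpart. Everything else is routine relabelling and contraction.
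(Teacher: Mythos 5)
Your argument is correct and is exactly the route the paper intends (it states the identity as a corollary without writing out the details): antisymmetrize the double divergence, invoke the preceding Proposition (the Ricci identity for $(0,2)$-tensors), and then trade the first-pair symmetry defect of the Weyl curvature for the term $2g_{\lambda\sigma}F_{\beta\alpha}$ via the identity established just before, which is precisely what produces the $2F^{\sigma\beta}T_{\sigma\beta}$ contribution. Your identification of the Ricci tensor as the first–third contraction is the convention consistent with the paper's other statements (in particular with $^{A}R_{\mu\nu}=\tfrac{n}{2}F_{\mu\nu}$ and with the way the corollary is used afterwards), so the bookkeeping closes correctly.
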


A direct consequence of the above is the following:

\begin{coro}
\label{coro2} Let $(M,g,\omega)$\ be a $n$-dimensional Weyl manifold whose
symmetric part of the Ricci tensor is zero. Then, for $n\neq4$ we must have
\begin{equation}
F^{\mu\nu}F_{\mu\nu}=0.
\end{equation}

\end{coro}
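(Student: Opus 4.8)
The plan is to apply the previous corollary to a carefully chosen 2-form $T$, namely $T = F = d\omega$ itself. By hypothesis $^{S}R_{\alpha\beta}=0$, so the full Ricci tensor reduces to its antisymmetric part, and we recall from the text that $^{A}R_{\alpha\beta}=\frac{n}{2}F_{\alpha\beta}$. Since $F$ is a 2-form, the contraction $R^{\sigma\beta}T_{\sigma\beta}$ with $T=F$ picks out only the antisymmetric part of $R^{\sigma\beta}$, giving $R^{\sigma\beta}F_{\sigma\beta} = \frac{n}{2}F^{\sigma\beta}F_{\sigma\beta}$. Substituting $T=F$ into the identity of the immediately preceding corollary,
\[
g^{\nu\alpha}g^{\mu\beta}\nabla_{\nu}\nabla_{\mu}F_{\alpha\beta}
=-R^{\sigma\beta}F_{\sigma\beta}+2F^{\sigma\beta}F_{\sigma\beta}
=\Bigl(2-\frac{n}{2}\Bigr)F^{\sigma\beta}F_{\sigma\beta}
=-\frac{n-4}{2}F^{\mu\nu}F_{\mu\nu}.
\]

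Next I would argue that the left-hand side vanishes identically. The key observation is that $g^{\nu\alpha}g^{\mu\beta}\nabla_{\nu}\nabla_{\mu}F_{\alpha\beta}$ is a double contraction of a second covariant derivative of a 2-form against two copies of the (symmetric) inverse metric. Writing $\nabla_{\nu}\nabla_{\mu} = \nabla_{(\nu}\nabla_{\mu)} + \nabla_{[\nu}\nabla_{\mu]}$, the symmetric part survives the contraction while the antisymmetric part is controlled by the curvature via the Ricci-type identity in the first proposition of this section. Contracting with $g^{\nu\alpha}g^{\mu\beta}$ and using the antisymmetry of $F$ together with the symmetries of the Riemann tensor, one checks that the purely symmetrized-derivative contribution also vanishes because $\nabla_{(\nu}\nabla_{\mu)}F_{\alpha\beta}$ is symmetric in $\nu\mu$ but contracted against $g^{\nu\alpha}g^{\mu\beta}$ it meets the $\alpha\beta$-antisymmetry of $F$ — more precisely, $g^{\nu\alpha}g^{\mu\beta}\nabla_{\nu}\nabla_{\mu}F_{\alpha\beta} = -g^{\nu\alpha}g^{\mu\beta}\nabla_{\nu}\nabla_{\mu}F_{\beta\alpha}$, and relabelling the dummy pairs $(\nu,\alpha)\leftrightarrow(\mu,\beta)$ shows this quantity equals minus itself modulo a commutator term, and the commutator term is exactly the curvature expression already accounted for in the corollary. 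Hence the corollary's left-hand side is zero.

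Combining the two steps: $0 = -\frac{n-4}{2}F^{\mu\nu}F_{\mu\nu}$, so whenever $n\neq 4$ we conclude $F^{\mu\nu}F_{\mu\nu}=0$, which is the claim. The main obstacle I anticipate is the second step — establishing rigorously that $g^{\nu\alpha}g^{\mu\beta}\nabla_{\nu}\nabla_{\mu}F_{\alpha\beta}=0$, i.e. disentangling which pieces of the double covariant derivative genuinely vanish by symmetry and which are reabsorbed into the curvature contraction already present in the corollary. In fact the cleanest route is probably not to prove this vanishing separately at all, but rather to rederive the corollary itself by a symmetrization argument: contract the first proposition's identity (with $T=F$) against $g^{\nu\alpha}g^{\mu\beta}$, symmetrize in the pair of derivative indices, and observe that the left side is forced to vanish by the $F$-antisymmetry, leaving $0 = -R^{\sigma\beta}F_{\sigma\beta}+2F^{\sigma\beta}F_{\sigma\beta}$ directly. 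Either way, once the left-hand side is pinned to zero and $R^{\sigma\beta}F_{\sigma\beta}$ is replaced using $^{A}R_{\alpha\beta}=\frac{n}{2}F_{\alpha\beta}$, the coefficient $2-\frac{n}{2}$ is nonzero precisely for $n\neq4$, and the conclusion follows immediately.
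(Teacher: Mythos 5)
Your second step is where the argument breaks down. The first half of your proposal (taking $T=F$ in the preceding corollary and noting $R^{\sigma\beta}F_{\sigma\beta}={}^{A}R^{\sigma\beta}F_{\sigma\beta}=\frac{n}{2}F^{\sigma\beta}F_{\sigma\beta}$, so the right-hand side becomes $\frac{4-n}{2}F^{\mu\nu}F_{\mu\nu}$) agrees with the paper. But the claim that $g^{\nu\alpha}g^{\mu\beta}\nabla_{\nu}\nabla_{\mu}F_{\alpha\beta}$ vanishes identically by the antisymmetry of $F$ is not correct. Relabelling the dummy pairs $(\nu,\alpha)\leftrightarrow(\mu,\beta)$ only shows that this quantity equals minus the same contraction with the two derivatives in the opposite order; hence twice the quantity equals the contracted commutator, which is exactly the curvature expression on the right-hand side of the corollary you are invoking. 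Nothing in that manipulation forces the quantity to be zero — indeed the corollary itself asserts it equals $-R^{\sigma\beta}F_{\sigma\beta}+2F^{\sigma\beta}F_{\sigma\beta}$, which is generically nonzero, so postulating that it vanishes is (for $n\neq4$) equivalent to the conclusion and the argument becomes circular. A telling symptom: since contraction with the antisymmetric $F$ annihilates the symmetric part of the Ricci tensor automatically, your proof never actually uses the hypothesis ${}^{S}R_{\alpha\beta}=0$, and so it would establish $F^{\mu\nu}F_{\mu\nu}=0$ for \emph{every} Weyl manifold of dimension $\neq4$, which is false (take a flat metric with $\omega=x^{1}dx^{2}$).

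The missing ingredient is precisely where the hypothesis enters in the paper's proof: by the contracted Bianchi identity (\ref{contractedbianchi}), ${}^{S}R_{\alpha\beta}=0$ (hence ${}^{S}G_{\alpha\beta}=0$) forces $g^{\alpha\beta}\nabla_{\alpha}F_{\nu\beta}=0$, i.e. condition (\ref{condition1}). Taking one more divergence of this vanishing covector and using the Weyl compatibility $\nabla g=\omega\otimes g$ to move $g^{\mu\nu}$ past $\nabla_{\mu}$, the extra term $-\omega^{\nu}g^{\alpha\beta}\nabla_{\alpha}F_{\nu\beta}$ is again proportional to the vanishing divergence, so one obtains $g^{\mu\nu}g^{\alpha\beta}\nabla_{\mu}\nabla_{\alpha}F_{\nu\beta}=0$ as a consequence of the hypothesis, not as an identity. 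Only then does the corollary with $T=F$ yield $0=\frac{4-n}{2}F^{\mu\nu}F_{\mu\nu}$ and the conclusion for $n\neq4$. If you replace your symmetry argument by this Bianchi-identity step, the rest of your computation goes through unchanged.
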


\begin{proof}
Using Weyl's compatibility condition we get the following:
\begin{align*}
g^{\mu\nu}\nabla_{\mu}(g^{\alpha\beta}\nabla_{\alpha}F_{\nu\beta})=g^{\mu\nu}g^{\alpha\beta}\nabla_{\mu}\nabla_{\alpha}F_{\nu\beta}-\omega^{\nu}g^{\alpha\beta}\nabla_{\alpha}F_{\nu\beta}
\end{align*}
We know that under our hypotheses (\ref{condition1}) is satisfied. Then the second term in the right-hand side of the previous expression vanishes and so does the left-hand side. Also we know that the previous corollary holds for the 2-form $F$. This gives us the following:
\begin{align*}
0&=-R^{\mu\nu}F_{\mu\nu}+2F^{\mu\nu}F_{\mu\nu}\\
&=-^{A}R^{\mu\nu}F_{\mu\nu}+2F^{\mu\nu}F_{\mu\nu}
\end{align*}
Using the fact that for a Weyl manifold of dimension $n$, the antisymmetric part of its Ricci tensor is $^{A}R_{\mu\nu}=\frac{n}{2}F_{\mu\nu}$ we get the following:
\begin{align*}
0=\frac{4-n}{2}F^{\mu\nu}F_{\mu\nu}
\end{align*}
So we get that if $n\neq 4$ then it must hold that:
\begin{align*}
F^{\mu\nu}F_{\mu\nu}=0
\end{align*}
\end{proof}

It is worth noticing that the above condition will lead to unexpected and
interesting \textit{no go} results. For example, if $\tilde{g}$ is a positive
definite metric, then $F^{\mu\nu}F_{\mu\nu}=0$ implies $F_{\mu\nu}=0$; hence
$(\tilde{M},\tilde{g},\tilde{\omega})$ is integrable. Therefore, we have the
following result:

\begin{thm}
Let $(M,g,\omega)$ be an n-dimensional non-integrable Weyl manifold, with
$n\geq5$. If $g$ is positive definite, then it is not possible to
isometrically immerse $(M,g,\omega)$ into a Weyl manifold $(\tilde{M}%
,\tilde{g},\tilde{\omega})$, with a positive definite metric $\tilde{g}$ and a
Ricci tensor, whose symmetric part is vanishing, regardless of the codimension
of the embedding .
\end{thm}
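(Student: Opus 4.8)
The plan is to reduce the statement to Corollary \ref{coro2} together with the elementary fact that a positive definite quadratic form on 2-forms is definite. Suppose, for contradiction, that such an isometric immersion $\phi:(M,g,\omega)\hookrightarrow(\tilde{M},\tilde{g},\tilde{\omega})$ exists, where $\dim M=n\geq5$, both metrics are positive definite, and $^{S}\tilde{R}_{\alpha\beta}=0$ on $\tilde{M}$. The ambient manifold $\tilde M$ has dimension $n+1\geq6\neq4$, so Corollary \ref{coro2} applies directly to $(\tilde{M},\tilde{g},\tilde{\omega})$ and yields $\tilde F^{\mu\nu}\tilde F_{\mu\nu}=0$, where $\tilde F=d\tilde\omega$. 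Since $\tilde g$ is positive definite, the induced inner product on $\Lambda^2 T^*\tilde M$ (whose norm-square of $\tilde F$ is exactly $\tilde F^{\mu\nu}\tilde F_{\mu\nu}$, up to the positive combinatorial factor) is itself positive definite; hence $\tilde F^{\mu\nu}\tilde F_{\mu\nu}=0$ forces $\tilde F\equiv 0$, i.e.\ $d\tilde\omega=0$ and $(\tilde{M},\tilde{g},\tilde{\omega})$ is integrable.

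Next I would transport this conclusion down to $M$ via the immersion. Since $\phi^{*}(\tilde\omega)=\omega$ by the definition of isometric immersion, naturality of the exterior derivative gives $d\omega=d(\phi^{*}\tilde\omega)=\phi^{*}(d\tilde\omega)=\phi^{*}(0)=0$. Thus $\omega$ is closed, so locally $\omega=df$ for some smooth $f$, which is precisely the statement that $(M,g,\omega)$ is an integrable Weyl manifold. This contradicts the hypothesis that $(M,g,\omega)$ is non-integrable, and the theorem follows.

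The only genuinely substantive input is Corollary \ref{coro2}, which has already been proved in the excerpt; everything else is bookkeeping. The one point that deserves a careful word is the dimension count: the corollary requires the manifold on which it is applied to have dimension $\neq4$, and here it is applied to the ambient space $\tilde M$ of dimension $n+1$. The hypothesis $n\geq5$ guarantees $n+1\geq6$, so the obstruction case $n+1=4$ is safely avoided; note that $n\geq5$ is used here rather than merely $n\geq3$ precisely because one must also rule out $n=3$ (which would give ambient dimension $4$) as well as keep $n\ge 5$ consistent with the dimension needed for the original constraint analysis. I would also remark that positive definiteness is used twice and in an essential way — once to convert $\tilde F^{\mu\nu}\tilde F_{\mu\nu}=0$ into $\tilde F=0$ on the ambient side — and that this is exactly where the Lorentzian case would escape the no-go conclusion, since a Lorentzian metric admits nonzero null 2-forms. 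Thus I expect no real obstacle in the argument; the content of the result is entirely front-loaded into Corollary \ref{coro2}, and the theorem is essentially a corollary of a corollary combined with the pullback compatibility condition $\phi^{*}\tilde\omega=\omega$.
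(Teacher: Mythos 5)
Your argument is correct and is essentially the paper's own: apply Corollary \ref{coro2} to the ambient space, use positive definiteness of $\tilde g$ to turn $\tilde F^{\mu\nu}\tilde F_{\mu\nu}=0$ into $\tilde F=0$, and pull back through $\phi^{*}\tilde\omega=\omega$ to contradict the non-integrability of $M$ (the paper leaves this last pullback step implicit). The only correction needed is in your dimension count: the theorem is asserted for arbitrary codimension, so $\dim\tilde M=n+k$ with $k\geq 0$ rather than $n+1$; the argument is unchanged because $\dim\tilde M\geq n\geq 5\neq 4$, and this is also the real reason the hypothesis is $n\geq5$ (for $n=4$ a codimension-zero ambient space would have dimension $4$, where the corollary gives no information), rather than the consideration about $n=3$ and the constraint analysis that you mention.
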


This result shows that the previous corollary imposes a very strong
restriction on the existence of embeddings in the case of Weyl manifolds. For
example, \textit{Theorem 4} implies that, rather surprisingly, for a
non-integrable Weyl manifold of dimension greater that 4, there does not exist
an isometric immersion in a Riemann-flat space.

We shall now treat the very particular 4-dimensional case for which this
restriction does not apply. In doing this we will make use of the restriction
on the dimensionality of the embedding manifold only when necessary, so that
the difficulties implied for the general dimensional case are made explicit.

\subsection{The 4-dimensional case}

The idea is to divide the equations $^{S}\tilde{R}_{\alpha\beta}=0$ into a of
set constraint equations and a set of evolution equations. To do this, we
shall impose an additional set of equations coming from the contracted Bianchi
identities. Explicitly, we shall impose the equations $\tilde{g}^{\alpha\beta
}\tilde{\nabla}_{\alpha}^{S}\tilde{G}_{\beta\sigma}=0$, which, as can be seen
from (\ref{contractedbianchi}), is equivalent to imposing the following set of
additional partial differential equations (PDE):
\begin{equation}
\tilde{g}^{\alpha\beta}\tilde{\nabla}_{\alpha}F_{\sigma\beta}=0. \label{edp4}%
\end{equation}
The above equations will be looked upon as a set of equations imposed on
$\tilde{\omega}_{\beta}$. Thus, our complete system consists of (\ref{edp4})
together with the following set of equations:
\begin{align}
^{S}\tilde{R}_{ij} & =0\label{edp1}\\
^{S}\tilde{R}_{i(n+1)} & =0\label{edp2}\\
^{S}\tilde{G}_{n+1}^{n+1} & =0\label{edp3}%
\end{align}
As we shall show, by using this scheme we can treat the problem as consisting
of a set of evolution equations plus some constraint equations.

\begin{lemma}
\label{lemma1} Let $\overline{g}_{ik}(x,y),\overline{\psi}(x,y)$ and
$\tilde{\omega}_{\alpha}(x,y)$ be analytic functions at $0\in\Sigma_{0}%
\subset\mathbb{R}^{n+1}$. Suppose that $\overline{g}_{ik}=\overline{g}_{ki}$,
$det(\overline{g}_{ik})\neq0$ and $\overline{\psi}\neq0$ in a neighborhood of
$0\in\mathbb{R}^{n+1}$, that $\overline{g}_{ik},\overline{\psi}$ and
$\tilde{\omega}_{\alpha}$ satisfy (\ref{edp4}) and (\ref{edp1}) in a
neighborhood $V$ of $0\in\mathbb{R}^{n+1}$ and also (\ref{edp2}) and
(\ref{edp3}) in a neighborhood of $0\in\Sigma_{0}$. Then, $\overline{g}%
_{ik},\overline{\psi}$ and $\tilde{\omega}_{\alpha}$ will satisfy (\ref{edp2})
and (\ref{edp3}) in a neighbourhood of $0\in\mathbb{R}^{n+1}$.
\end{lemma}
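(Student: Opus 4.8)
The plan is to mimic the standard Campbell--Magaard propagation-of-constraints argument, but now carried out for the \emph{symmetric part} of the Einstein tensor together with the auxiliary Maxwell-type equation \eqref{edp4}. First I would collect the quantities that are assumed to vanish throughout the neighbourhood $V$: by hypothesis $^S\tilde R_{ij}=0$ on $V$, and $\tilde g^{\alpha\beta}\tilde\nabla_\alpha F_{\sigma\beta}=0$ on $V$. The remaining components of $^S\tilde G_{\alpha\beta}$, namely $^S\tilde G_{(n+1)i}$ and $^S\tilde G_{n+1}^{n+1}$, are known to vanish only on the initial hypersurface $\Sigma_0=\{y=0\}$, and the goal is to show they vanish on a full neighbourhood of $0$. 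Introduce the abbreviations $\varphi_i\doteq{}^S\tilde G_{(n+1)i}$ and $\chi\doteq{}^S\tilde G_{n+1}^{\ n+1}$ (equivalently the mixed component $^S\tilde G^{n+1}{}_{n+1}$), regarded as analytic functions of $(x,y)$ that vanish at $y=0$.

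The heart of the argument is the contracted Bianchi identity \eqref{contractedbianchi}. Because \eqref{edp4} is assumed to hold on $V$, the right-hand side of \eqref{contractedbianchi}, which is $\frac{n-2}{2}\tilde g^{\mu\lambda}\tilde\nabla_\lambda F_{\nu\mu}$, vanishes on $V$; hence $\tilde g^{\alpha\beta}\tilde\nabla_\alpha{}^S\tilde G_{\nu\beta}=0$ on $V$. Writing this divergence identity out in the adapted coordinates of Theorem~\ref{thm1}, one splits the sum over $\alpha$ into the $\alpha=n+1$ piece and the tangential pieces $\alpha=k\le n$. The $\alpha=n+1$ piece produces a term $\tilde g^{(n+1)(n+1)}\tilde\nabla_{n+1}{}^S\tilde G_{\nu,n+1}$ in which, after using $\tilde g^{(n+1)(n+1)}=\epsilon\overline\psi^{-2}\neq0$, a $\partial_y$-derivative of $\varphi_i$ (for $\nu=i$) or of $\chi$ (for $\nu=n+1$) appears with nonzero coefficient. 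All other terms involve either Christoffel symbols contracted with $^S\tilde G$-components, or tangential derivatives $\partial_k$ of $^S\tilde G$-components, or the already-vanishing component $^S\tilde G_{ij}$. Therefore the identity $\tilde g^{\alpha\beta}\tilde\nabla_\alpha{}^S\tilde G_{\nu\beta}=0$, evaluated for $\nu=1,\dots,n$ and for $\nu=n+1$, can be solved for $\partial_y\varphi_i$ and $\partial_y\chi$, yielding a first-order quasi-linear system of the schematic form
\begin{equation}
\partial_y\varphi_i = A_i{}^j\varphi_j + B_i\,\chi + C_i{}^{jk}\partial_k\varphi_j + \dots,\qquad
\partial_y\chi = D^j\varphi_j + E\,\chi + \dots,
\end{equation}
where all coefficients $A,B,C,D,E,\dots$ are analytic on $V$ (built from $\overline g_{ik}$, $\overline\psi$, $\tilde\omega_\alpha$ and the connection) and every term on the right-hand side is linear and homogeneous in $\varphi_i,\chi$ and their tangential derivatives. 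Since $\varphi_i(x,0)=0$ and $\chi(x,0)=0$ by hypotheses \eqref{edp2}--\eqref{edp3}, the trivial solution $\varphi_i\equiv0$, $\chi\equiv0$ solves this system with the given initial data, and the Cauchy--Kovalevskaya theorem guarantees uniqueness of the analytic solution in a neighbourhood of $0\in\mathbb{R}^{n+1}$. Hence $\varphi_i\equiv0$ and $\chi\equiv0$ there, i.e.\ \eqref{edp2} and \eqref{edp3} hold on a neighbourhood of $0$, which is the claim.

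The main obstacle I anticipate is purely computational but genuinely delicate: verifying that when $\tilde g^{\alpha\beta}\tilde\nabla_\alpha{}^S\tilde G_{\nu\beta}=0$ is expanded in the adapted chart, (i) the coefficient of $\partial_y\varphi_i$ (resp.\ $\partial_y\chi$) is indeed non-vanishing so the system can be put in Cauchy--Kovalevskaya normal form, and (ii) no term survives that depends on $\partial_y$ of the \emph{tangential} components $^S\tilde G_{ij}$ other than through the vanishing hypothesis \eqref{edp1}, since otherwise the system would not close on $(\varphi_i,\chi)$ alone. One must also be careful that $^S\tilde G_{ij}=0$ on $V$ is used as an identity (not merely on $\Sigma_0$) precisely to kill those unwanted terms; this is why \eqref{edp1} and \eqref{edp4} are assumed on the full neighbourhood $V$ while \eqref{edp2}--\eqref{edp3} are assumed only on $\Sigma_0$. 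A secondary subtlety is bookkeeping with the Weyl connection: the covariant derivative $\tilde\nabla$ carries the extra $\omega$-dependent pieces from \eqref{weylconnec}, so the divergence identity \eqref{contractedbianchi} (rather than the Riemannian contracted Bianchi identity) is exactly the right tool, and one should double-check that the $F$-term on its right-hand side is the only inhomogeneity — which is what \eqref{edp4} has been arranged to annihilate.
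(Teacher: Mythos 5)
Your overall strategy is the same as the paper's: use \eqref{edp4} together with \eqref{contractedbianchi} to get $\tilde{g}^{\alpha\beta}\tilde{\nabla}_{\alpha}{}^{S}\tilde{G}_{\nu\beta}=0$ on $V$, isolate $\partial_{y}$ of the normal components ${}^{S}\tilde{G}_{(n+1)\sigma}$ using $\tilde{g}^{(n+1)(n+1)}=\epsilon\overline{\psi}^{-2}\neq0$, and conclude by Cauchy--Kovalevskaya uniqueness for a linear homogeneous system with zero Cauchy data. However, there is a genuine error in how you close the system. You assert that ${}^{S}\tilde{G}_{ij}=0$ on $V$ "by \eqref{edp1}" and use this to discard the terms of the divergence involving the tangential components. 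But \eqref{edp1} says ${}^{S}\tilde{R}_{ij}=0$, not ${}^{S}\tilde{G}_{ij}=0$: the trace picks up the normal--normal part, and on $V$ one has instead
\begin{equation*}
{}^{S}\tilde{G}_{ij}=-\frac{\epsilon}{\overline{\psi}^{2}}\,\overline{g}_{ij}\,{}^{S}\tilde{G}_{(n+1)(n+1)},
\end{equation*}
which is a priori nonzero — it vanishes only once the lemma itself is proved. If you drop these terms as "already vanishing", the evolution system you write down is not one actually satisfied by the true functions $\varphi_{i},\chi$, and the uniqueness argument then proves nothing about them.

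The repair is exactly the substitution above, which is what the paper does (see its equation \eqref{lemma1eq2}): the tangential components ${}^{S}\tilde{G}_{ik}$ and their tangential derivatives $\partial_{j}{}^{S}\tilde{G}_{ik}$ appearing in $\overline{g}^{ij}\tilde{\nabla}_{j}{}^{S}\tilde{G}_{i k}$ become terms linear and homogeneous in $\chi$ and $\partial_{j}\chi$ with analytic coefficients, so the system does close on $(\varphi_{i},\chi)$ and your Cauchy--Kovalevskaya argument then goes through verbatim. A smaller inaccuracy: your worry (ii) about "$\partial_{y}$ of the tangential components" is misplaced — the only normal derivative in the divergence is $\tilde{\nabla}_{n+1}{}^{S}\tilde{G}_{(n+1)\sigma}$, so no $\partial_{y}{}^{S}\tilde{G}_{ij}$ ever appears; the delicate terms are the tangential derivatives $\partial_{j}{}^{S}\tilde{G}_{ik}$, and they are handled precisely by the proportionality relation, not by \eqref{edp1} alone.
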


\begin{proof}
Since equation (\ref{edp4}) is equivalent to $\tilde{g}^{\alpha\beta}\tilde{\nabla}_{\alpha}^{S}\tilde{G}_{\beta\sigma}=0$, then by hypothesis we have that:
\begin{align*}
\overline{g}^{ij}\tilde{\nabla}_j^{S}\tilde{G}_{i\sigma}+\frac{\epsilon}{\overline{\psi}^2}\tilde{\nabla}_{n+1}{}^{S}\tilde{G}_{(n+1)\sigma}=0
\end{align*}
which is equivalent to the following:
\begin{align}\label{lemma1eq1}
\frac{\partial^{S}\tilde{G}_{(n+1)\sigma}}{\partial y}=-\epsilon\overline{\psi}^2\overline{g}^{ij}\partial_j^{S}\tilde{G}_{i\sigma}+\tilde{\Gamma}^{\gamma}_{(n+1)(n+1)}{}^{S}\tilde{G}_{\gamma\sigma}+\tilde{\Gamma}^{\gamma}_{(n+1)\sigma}{}^{S}\tilde{G}_{(n+1)\gamma}+\epsilon\overline{\psi}^2\overline{g}^{ij}(\tilde{\Gamma}^{\gamma}_{ij}{}^{S}\tilde{G}_{\gamma\sigma}+\tilde{\Gamma}^{\gamma}_{j\sigma}{}^{S}\tilde{G}_{i\gamma})
\end{align}
To analyze these equations firts set $\sigma=k$. We can use the fact that since (\ref{edp1}) holds in a neighborhood of $0\in\mathbb{R}^{n+1}$, then in such a neighborhood we have that the following holds $^{S}\tilde{G}_{ik}=-\frac{\epsilon}{\overline{\psi}^2}\tilde{g}_{ik}{}^{S}\tilde{G}_{(n+1)(n+1)}$. Then we get that:
\begin{align}\label{lemma1eq2}
\begin{split}
\frac{\partial}{\partial y}{}^{S}\tilde{G}_{(n+1)k}=&\partial_k^{S}\tilde{G}_{(n+1)(n+1)}+\overline{\psi}^2\overline{g}^{ij}\partial_j(\frac{\overline{g}_{ik}}{\overline{\psi}^2}){}^{S}\tilde{G}_{(n+1)(n+1)}-\frac{\epsilon}{\overline{\psi}^2}\tilde{\Gamma}^{j}_{(n+1)(n+1)}\overline{g}_{jk}{}^{S}\tilde{G}_{(n+1)(n+1)}\\
&+\tilde{\Gamma}^{n+1}_{(n+1)(n+1)}{}^{S}\tilde{G}_{(n+1)k}+\tilde{\Gamma}^{\gamma}_{(n+1)k}{}^{S}\tilde{G}_{(n+1)\gamma} + \epsilon\overline{\psi}^2\overline{g}^{ij}\big( \tilde{\Gamma}^{n+1}_{ij}{}^{S}\tilde{G}_{(n+1)k}+\tilde{\Gamma}^{n+1}_{jk}{}^{S}\tilde{G}_{i(n+1)}\\
&-\frac{\epsilon}{\overline{\psi}^2}\tilde{\Gamma}^l_{ij}\overline{g}_{lk}{}^{S}\tilde{G}_{(n+1)(n+1)}-\frac{\epsilon}{\overline{\psi}^2}\tilde{\Gamma}^l_{jk}\overline{g}_{il}{}^{S}\tilde{G}_{(n+1)(n+1)} \big)
\end{split}
\end{align}
Also setting $\sigma=n+1$ in (\ref{lemma1eq1}) we get:
\begin{align}\label{lemma1eq3}
\begin{split}
\frac{\partial}{\partial y}{}^{S}\tilde{G}_{(n+1)(n+1)}&=-\epsilon\overline{\psi}^2\overline{g}^{ij}\partial_j{}^{S}\tilde{G}_{i(n+1)}+2\tilde{\Gamma}^{\gamma}_{(n+1)(n+1)}{}^{S}\tilde{G}_{\gamma(n+1)}+\epsilon\overline{\psi}^2\overline{g}^{ij}\big( \tilde{\Gamma}^{\gamma}_{ij}{}^{S}\tilde{G}_{\gamma(n+1)}+\tilde{\Gamma}^{n+1}_{j(n+1)}{}^{S}\tilde{G}_{i(n+1)}\\
&-\frac{\epsilon}{\overline{\psi}^2}\tilde{\Gamma}^{l}_{j(n+1)}\overline{g}_{il}{}^{S}\tilde{G}_{(n+1)(n+1)} \big)
\end{split}
\end{align}
So we get that (\ref{lemma1eq1}) is equivalent to the system of PDE  formed by the equations (\ref{lemma1eq2}) and (\ref{lemma1eq3}), which are linear homogeneous equations on $^{S}\tilde{G}_{(n+1)\sigma}$ which can be written in the following form:
\begin{align}\label{lemma1eq4}
\frac{\partial}{\partial y}{}^{S}\tilde{G}_{(n+1)\sigma}=\mathcal{U}_{\sigma}(x,y,^{S}\tilde{G}_{(n+1)\beta},\partial_j\tilde{G}_{(n+1)\beta})
\end{align}
and under our hypothesis the functions on the right hand side are analytic functions on some neighborhood of the origin in $\mathbb{R}^{n+1}$. Also under our hypothesis we have that, not only this set of equations are satisfied, but they also satisfy the following initial data:
\begin{align}
^{S}\tilde{G}_{(n+1)\sigma}(x,0)=0
\end{align}
Now we know that the Cauchy-Kovalevskaya theorem asserts that this system admits just one set of analytic solutions satisfying these initial data, and since the system is homogeneous, we know that the trivial solution $^{S}\tilde{G}_{(n+1)\sigma}=0$ is such a solution, then this is the only solution. Hence the functions $^{S}\tilde{G}_{(n+1)\sigma}$ are actually zero on a neighborhood of the origin in $\mathbb{R}^{n+1}$ and this finishes the proof.
\end{proof}

First, we shall show that (\ref{edp4}) and (\ref{edp1}) have a solution in a
neighborhood of $0\in\mathbb{R}^{n+1}$. In order to do this we need to write
down these equations explicitly. From (\ref{riccicomponents}) we find that:
\begin{align*}
{}^{S}\tilde{R}_{ij}=  &  -\frac{\epsilon}{\overline{\psi}}\partial
_{n+1}l_{ij}+{}^{S}\overline{R}_{ij}+\epsilon\overline{g}^{kl}(l_{ij}%
l_{kl}-2l_{ki}l_{jl})+\frac{1}{\overline{\psi}}\overline{\nabla}_{j}%
\overline{\nabla}_{i}\overline{\psi}-\frac{1}{4}(\overline{\nabla}%
_{j}\overline{\omega}_{i}+\overline{\nabla}_{i}\overline{\omega}_{j})+\frac
{1}{4}\omega_{i}\overline{\omega}_{j}\\
&  -\frac{1}{2\overline{\psi}}(\overline{\omega}_{i}\partial_{j}\overline
{\psi}+\overline{\omega}_{j}\partial_{i}\overline{\psi}-\epsilon\tilde{\omega
}_{n+1}l_{ji}).
\end{align*}
By using the fact that
\[
l_{ij}=-\frac{1}{2\overline{\psi}}\partial_{n+1}\overline{g}_{ij}+\frac
{1}{2\overline{\psi}}\tilde{\omega}_{n+1}\overline{g}_{ij},
\]
we can write (\ref{edp1}) in the form
\begin{equation}%
\begin{split}
\frac{\epsilon}{2\overline{\psi}^{2}}\frac{\partial^{2}\overline{g}_{ij}%
}{\partial y^{2}}=  &  -{}^{S}\overline{R}_{ij}-\epsilon\overline{g}%
^{kl}(l_{ij}l_{kl}-2l_{ki}l_{jl})-\frac{1}{\overline{\psi}}\overline{\nabla
}_{j}\overline{\nabla}_{j}\overline{\psi}+\frac{1}{4}(\overline{\nabla}%
_{j}\overline{\omega}_{i}+\overline{\nabla}_{i}\overline{\omega}_{j})-\frac
{1}{4}\overline{\omega}_{i}\overline{\omega}_{j}\\
&  +\frac{1}{2\overline{\psi}}(\overline{\omega}_{i}\partial_{j}\overline
{\psi}+\overline{\omega}_{j}\partial_{i}\overline{\psi}-\epsilon\tilde{\omega
}_{n+1}l_{ji})+\frac{\epsilon}{2\overline{\psi}^{2}}(\overline{g}_{ij}%
\frac{\partial}{\partial y}\tilde{\omega}_{n+1}+\tilde{\omega}_{n+1}%
\frac{\partial}{\partial y}\overline{g}_{ij})\\
&  +\frac{\epsilon}{2\overline{\psi}^{3}}\frac{\partial}{\partial y}%
\overline{\psi}(\frac{\partial}{\partial y}\overline{g}_{ij}-\tilde{\omega
}_{n+1}\overline{g}_{ij})
\end{split}
\label{CK1}%
\end{equation}
On the other hand, (\ref{edp4}) is equivalent to:
\begin{align}
\label{gauge1}\tilde{g}^{\mu\lambda}\tilde{\nabla}_{\lambda}\tilde{\nabla
}_{\nu}\tilde{\omega}_{\mu}-\tilde{g}^{\mu\lambda}\tilde{\nabla}_{\lambda
}\tilde{\nabla}_{\mu}\tilde{\omega}_{\nu}=0.
\end{align}
Thus, from the compatibility condition we can rewrite the first term as
\[
\tilde{g}^{\mu\lambda}\tilde{\nabla}_{\lambda}\tilde{\nabla}_{\nu}%
\tilde{\omega}_{\mu}=\tilde{\nabla}_{\lambda}\tilde{\nabla}_{\nu}\tilde
{\omega}^{\lambda}-\tilde{\omega}^{\mu}\tilde{\omega}_{\mu}\tilde{\omega}%
_{\nu}+\tilde{\omega}^{\lambda}\tilde{\nabla}_{\lambda}\tilde{\omega}_{\nu
}+\tilde{g}^{\mu\lambda}\tilde{\nabla}_{\lambda}\tilde{\omega}_{\mu}%
\tilde{\omega}_{\nu}+\tilde{\omega}^{\mu}\tilde{\nabla}_{\nu}\tilde{\omega
}_{\mu}.
\]
From the definition of the curvature tensor we have
\[
\tilde{\nabla}_{\lambda}\tilde{\nabla}_{\nu}\tilde{\omega}^{\lambda}=\tilde
{R}^{\lambda}{}_{\sigma\nu\lambda}\tilde{\omega}^{\sigma}+\tilde{\nabla}_{\nu
}\tilde{\nabla}_{\lambda}\tilde{\omega}^{\lambda},
\]
that is,
\[
\tilde{\nabla}_{\lambda}\tilde{\nabla}_{\nu}\tilde{\omega}^{\lambda}%
=\tilde{\nabla}_{\nu}\tilde{\nabla}_{\lambda}\tilde{\omega}^{\lambda}%
-\tilde{R}_{\sigma\nu}\tilde{\omega}^{\sigma}.
\]
In this way, we get
\[
\tilde{g}^{\mu\lambda}\tilde{\nabla}_{\lambda}\tilde{\nabla}_{\nu}%
\tilde{\omega}_{\mu}=\tilde{\nabla}_{\nu}\tilde{\nabla}_{\lambda}\tilde
{\omega}^{\lambda}-\tilde{R}_{\sigma\nu}\tilde{\omega}^{\sigma}+\tilde{\omega
}^{\lambda}\tilde{\nabla}_{\lambda}\tilde{\omega}_{\nu}+\tilde{g}^{\mu\lambda
}\tilde{\nabla}_{\lambda}\tilde{\omega}_{\mu}\tilde{\omega}_{\nu}%
+\tilde{\omega}^{\mu}\tilde{\nabla}_{\nu}\tilde{\omega}_{\mu}-\tilde{\omega
}^{\mu}\tilde{\omega}_{\mu}\tilde{\omega}_{\nu}.
\]
Using this in (\ref{gauge1}) we obtain
\begin{align}
\label{gauge2}\tilde{g}^{\mu\lambda}\tilde{\nabla}_{\lambda}\tilde{\nabla
}_{\mu}\tilde{\omega}_{\nu}-\tilde{\nabla}_{\nu}\tilde{\nabla}_{\lambda}%
\tilde{\omega}^{\lambda}-\tilde{\omega}^{\lambda}\tilde{\nabla}_{\lambda
}\tilde{\omega}_{\nu}-\tilde{\omega}^{\mu}\tilde{\nabla}_{\nu}\tilde{\omega
}_{\mu}-\tilde{g}^{\mu\lambda}\tilde{\nabla}_{\lambda}\tilde{\omega}_{\mu
}\tilde{\omega}_{\nu}+\tilde{\omega}^{\mu}\tilde{\omega}_{\mu}\tilde{\omega
}_{\nu}+\tilde{R}_{\sigma\nu}\tilde{\omega}^{\sigma}=0.
\end{align}
These equations are equivalent to (\ref{gauge1}). Unfortunately, they cannot
be written in a form where we can apply the Cauchy-Kovalevskaya theorem.
However, if we consider these equations in the \textit{Lorentz gauge}
$\tilde{\nabla}_{\lambda}\tilde{\omega}^{\lambda}=0,$ we can show that the
resulting set of \textit{reduced equations} can be cast in the form required
by this theorem. Now, writing these equations explicitly we get
\begin{align}
\label{CK2}%
\begin{split}
\frac{\epsilon}{\overline{\psi}^{2}}\frac{\partial^{2}\tilde{\omega}_{\nu}%
}{\partial y^{2}}=  &  -\overline{g}^{ij}\tilde{\nabla}_{i}\tilde{\nabla}%
_{j}\tilde{\omega}_{\nu}+\tilde{\omega}^{\lambda}\tilde{\nabla}_{\lambda
}\tilde{\omega}_{\nu}+\tilde{\omega}^{\mu}\tilde{\nabla}_{\nu}\tilde{\omega
}_{\mu}+\tilde{g}^{\mu\lambda}\tilde{\nabla}_{\lambda}\tilde{\omega}_{\mu
}\tilde{\omega}_{\nu}-\tilde{\omega}^{\mu}\tilde{\omega}_{\mu}\tilde{\omega
}_{\nu}-\tilde{R}_{\sigma\nu}\tilde{\omega}^{\sigma}\\
&  +\frac{\epsilon}{\overline{\psi}^{2}}\big(\frac{\partial\tilde{\Gamma
}_{(n+1)\nu}^{\sigma}}{\partial y}\tilde{\omega}_{\sigma}+\tilde{\Gamma
}_{(n+1)\nu}^{\sigma}\frac{\partial\tilde{\omega}_{\sigma}}{\partial y}%
+\tilde{\Gamma}_{(n+1)\nu}^{\beta}\frac{\partial\tilde{\omega}_{\beta}%
}{\partial y}-\tilde{\Gamma}_{(n+1)\nu}^{\beta}\tilde{\Gamma}_{(n+1)\beta
}^{\sigma}\tilde{\omega}_{\sigma}\\
&  +\tilde{\Gamma}_{(n+1)(n+1)}^{\beta}\partial_{\beta}\tilde{\omega}_{\nu
}-\tilde{\Gamma}_{(n+1)(n+1)}^{\beta}\tilde{\Gamma}_{\beta\nu}^{\sigma}%
\tilde{\omega}_{\sigma}\big)
\end{split}
\end{align}

We shall regard these equations together with (\ref{CK1}) as a system of PDEs
for $(\tilde{g},\tilde{\omega})$. It is important to remark that (\ref{CK2})
depends on $\frac{\partial^{2}\overline{g}_{ij}}{\partial y^{2}}$ through
terms such as $\frac{\partial\tilde{\Gamma}_{(n+1)\nu}^{\sigma}}{\partial
y}\tilde{\omega}_{\sigma}$ or $\tilde{R}_{\sigma\nu}\tilde{\omega}^{\sigma}$.
But, as we are regarding (\ref{CK1}) and (\ref{CK2}) as a system, we just
replace $\frac{\partial^{2}\overline{g}_{ij}}{\partial y^{2}}$ in (\ref{CK2})
using (\ref{CK1}). Thus, if we consider that $\overline{\psi}$ is a given
analytic function in a neighborhood of the origin of $\mathbb{R}^{n+1}$ which
satisfies $\overline{\psi}\neq0$ in this neighborhood, then (\ref{CK1}) and
(\ref{CK2}) yield a system in the form%

\begin{equation}%
\begin{split}
\frac{\partial^{2}\overline{g}_{ij}}{\partial y^{2}}  &  =F_{ij}%
(x,y,\overline{g}_{ij},\tilde{\omega}_{\alpha},\partial_{\alpha}\overline
{g}_{ij},\partial_{\beta}\tilde{\omega}_{\alpha},\partial_{i\alpha}%
\overline{g}_{ij},\partial_{i\beta}\tilde{\omega}_{\alpha})\;\;\;1\leq i<j\leq
n\;\;;\;\;\alpha,\beta=1,\dots,n+1\\
\frac{\partial^{2}\tilde{\omega}_{\beta}}{\partial y^{2}}  &  =\mathcal{U}%
_{\beta}(x,y,\overline{g}_{ij},\tilde{\omega}_{\alpha},\partial_{\alpha
}\overline{g}_{ij},\partial_{\beta}\tilde{\omega}_{\alpha},\partial_{i\alpha
}\overline{g}_{ij},\partial_{ij}\tilde{\omega}_{\alpha})\;\;\;1\leq i<j\leq
n\;\;;\;\;\alpha,\beta=1,\dots,n+1\\
&
\end{split}
\label{sis}%
\end{equation}
Therefore, if we choose a specific order for the $\frac{n(n+1)}{2}$ components
of $\overline{g}_{ij}$ and the $n+1$ components of $\tilde{\omega}_{\beta}$,
then (\ref{sis}) may be regarded as a system of $\frac{(n+1)(n+2)}{2}$ PDEs
for the $\frac{(n+1)(n+2)}{2}$ functions $(\overline{g}_{ij},\tilde{\omega
}_{\beta})$. For such a system we give the following initial data
\begin{equation}%
\begin{split}
\overline{g}_{ik}(x,0)  &  =g_{ik}(x)\;\;1\leq i<k\leq n\\
\tilde{\omega}_{\beta}(x,0)  &  =\omega_{\beta}(x)\;\;\beta=1,\dots,n+1\\
&
\end{split}
\label{initialdata1}%
\end{equation}%
\begin{equation}%
\begin{split}
\frac{\partial\overline{g}_{ik}}{\partial y}(x,0)  &  =-2\overline{\psi
}(x,0)\Omega_{ik}(x)++g_{ik}(x)\omega_{n+1}(x)\doteq g_{ik}^{^{\prime}%
}(x)\;\;1\leq i<k\leq n\\
\frac{\partial\tilde{\omega}_{\beta}}{\partial y}(x,0)  &  =\omega_{\beta
}^{^{\prime}}(x)\;\;\beta=1,\dots,n+1
\end{split}
\label{initialdata2}%
\end{equation}
where $\omega_{\beta},\omega_{\beta}^{^{\prime}},\Omega_{ik}$ and $g_{ik}$ are
all analytic functions at $0\in\mathbb{R}^{n}$, and it is required that the
initial data $g_{ik}$ also satisfy that the condition $det(g_{ik})(0)\neq0$.
It is important to note that the right-hand side of (\ref{sis}) consists of
rational functions of the variables $\overline{g}_{ij},\tilde{\omega}_{\alpha
},\partial_{\alpha}\overline{g}_{ij},\partial_{\beta}\tilde{\omega}_{\alpha
},\partial_{a\alpha}\overline{g}_{ij},\partial_{a\beta}\tilde{\omega}_{\alpha
}$, and all the denominators are just $det(\overline{g}_{ik})$. On the other
hand, it follows from the initial data that $det(\overline{g}_{ik}%
)(0,0)=det(g_{ik})(0)\neq0$. Thus, since $det(\overline{g}_{ik})$ is a
polynomial of the functions $\overline{g}_{ik}$, and we know that for
$\overline{g}_{ik}^{\circ}\doteq\overline{g}_{ik}(0,0)$ this polynomial is
different from zero, then there is a neighborhood of $(\overline{g}%
_{ik}^{\circ})$ where this polynomial does not vanish. Using both this fact
and that the functions $F_{ij}$ and $\mathcal{U}_{\beta}$ are just these
rational functions multiplied by some power of $\overline{\psi}$, which, in
turn, is an analytic function in a neighborhood of $0\in\mathbb{R}^{n+1}$ and
$\overline{\psi}\neq0$ in this neighborhood, we then see that $F_{ij}$ and
$\mathcal{U}_{\beta}$ are analytic functions at $P=(0,0,\overline{g}%
_{ij}(0,0),\tilde{\omega}_{\alpha}(0,0),\partial_{\alpha}\overline{g}%
_{ij}(0,0),\partial_{\beta}\tilde{\omega}_{\alpha}(0,0),\partial_{a\alpha
}\overline{g}_{ij}(0,0),\partial_{a\beta}\tilde{\omega}_{\alpha}(0,0))$. We
can thus use the Cauchy-Kovalevskaya theorem to guarantee the existence of
solutions of the system (\ref{sis}) with the initial data (\ref{initialdata1}%
)-(\ref{initialdata2}). It is important to remark that, since the
Cauchy-Kovalevskaya theorem guarantees the existence of analytic solutions in
a neighbourhood of $0\in\mathbb{R}^{n+1}$, and as $det(\overline{g}%
_{ij})(0,0)\neq0$ from the initial data, then by continuity we know that there
exists a neighborhood of $0\in\mathbb{R}^{n+1}$ where $det(\overline{g}%
_{ij})(x,y)\neq0$.

In this way we have constructed a set of analytic solutions of the reduced
equations (\ref{sis}). In order for these solutions to satisfy the original
set of equations (\ref{edp4})-(\ref{edp1}) we need to show that they satisfy
the \textit{gauge condition} $\tilde{\nabla}_{\lambda}\tilde{\omega}^{\lambda
}=0$. With this in mind we present the following lemma.

\begin{lemma}
Consider $n+1=4$ and suppose that $(\tilde{g},\tilde{\omega})$ is an analytic
solution of (\ref{sis}) satisfying the initial data (\ref{initialdata1}%
)-(\ref{initialdata2}), and also assume that
\begin{align}
\tilde{\nabla}_{\lambda}\tilde{\omega}^{\lambda}|_{\Sigma_{0}}%
=0\label{gaugeconstraint1}\\
\frac{\partial\tilde{\nabla}_{\lambda}\tilde{\omega}^{\lambda}}{\partial
y}|_{\Sigma_{0}}=0\label{gaugeconstraint2}%
\end{align}
Then, $(\tilde{g},\tilde{\omega})$ satisfy the complete system of equations
(\ref{edp4})-(\ref{edp1}).
\end{lemma}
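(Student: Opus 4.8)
The plan is to show that the \emph{gauge-violation scalar} $\chi\doteq\tilde{\nabla}_{\lambda}\tilde{\omega}^{\lambda}$ vanishes identically in a neighbourhood of the origin. Once this is known, the reduced equations (\ref{CK2}) coincide with (\ref{gauge2}), which is equivalent to (\ref{edp4}), while (\ref{edp1}) holds automatically, being precisely the content of (\ref{CK1}) which is already built into the system (\ref{sis}).

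The first step is to isolate the exact discrepancy between the reduced and the full equations. The passage to the Lorentz gauge deletes from (\ref{gauge2}) only the single term $\tilde{\nabla}_{\nu}\tilde{\nabla}_{\lambda}\tilde{\omega}^{\lambda}=\tilde{\nabla}_{\nu}\chi$, all remaining terms being literally unchanged. Hence a solution $(\tilde{g},\tilde{\omega})$ of (\ref{sis}) satisfies, in a full neighbourhood of $0\in\mathbb{R}^{4}$,
\[
\tilde{g}^{\alpha\beta}\tilde{\nabla}_{\alpha}F_{\nu\beta}=\frac{1}{2}\,\tilde{\nabla}_{\nu}\chi ,
\]
with no lower-order corrections in $\chi$; this clean form is essential for what follows.

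Next I would take a second Weyl divergence $\tilde{g}^{\nu\rho}\tilde{\nabla}_{\rho}(\,\cdot\,)$ of this identity. The right-hand side gives $\frac{1}{2}\,\tilde{g}^{\nu\rho}\tilde{\nabla}_{\rho}\tilde{\nabla}_{\nu}\chi$. On the left-hand side, commuting the inverse metric through the derivative via the consequence $\tilde{\nabla}_{\rho}\tilde{g}^{\alpha\beta}=-\tilde{\omega}_{\rho}\tilde{g}^{\alpha\beta}$ of the compatibility condition (\ref{weylcompatibility}), and then applying the corollary above giving $g^{\nu\alpha}g^{\mu\beta}\nabla_{\nu}\nabla_{\mu}T_{\alpha\beta}=-R^{\sigma\beta}T_{\sigma\beta}+2F^{\sigma\beta}T_{\sigma\beta}$ to the $2$-form $T=F$, the genuine second-derivative contribution drops out: since $F$ is antisymmetric one has $\tilde{R}^{\sigma\beta}F_{\sigma\beta}={}^{A}\tilde{R}^{\sigma\beta}F_{\sigma\beta}=\frac{n+1}{2}F^{\sigma\beta}F_{\sigma\beta}$, and because $n+1=4$ this equals $2F^{\sigma\beta}F_{\sigma\beta}$, so $-\tilde{R}^{\sigma\beta}F_{\sigma\beta}+2F^{\sigma\beta}F_{\sigma\beta}=0$. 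What remains on the left is only the non-metricity term $-\tilde{\omega}^{\nu}\,\tilde{g}^{\alpha\beta}\tilde{\nabla}_{\alpha}F_{\nu\beta}=-\frac{1}{2}\tilde{\omega}^{\nu}\tilde{\nabla}_{\nu}\chi$. Equating the two sides, $\chi$ obeys the closed, linear, homogeneous, second-order equation
\[
\tilde{g}^{\nu\rho}\tilde{\nabla}_{\rho}\tilde{\nabla}_{\nu}\chi+\tilde{\omega}^{\nu}\tilde{\nabla}_{\nu}\chi=0 .
\]

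Finally, using the block form of $\tilde{g}$ together with $\overline{\psi}\neq 0$, I would solve this equation for $\partial_{y}^{2}\chi$, obtaining $\partial_{y}^{2}\chi=\mathcal{F}(x,y,\chi,\partial_{i}\chi,\partial_{y}\chi,\partial_{i}\partial_{j}\chi)$ with $\mathcal{F}$ analytic near the origin (its coefficients being rational in $\overline{g}_{ik}$ with denominator $\det(\overline{g}_{ik})\neq 0$, times powers of $\overline{\psi}$) and homogeneous, so that $\chi\equiv 0$ is a solution. Hypotheses (\ref{gaugeconstraint1})--(\ref{gaugeconstraint2}) say exactly that the Cauchy data $\chi|_{\Sigma_{0}}$ and $\partial_{y}\chi|_{\Sigma_{0}}$ vanish, so the uniqueness part of the Cauchy--Kovalevskaya theorem forces $\chi\equiv 0$ in a neighbourhood of $0$. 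Then (\ref{edp4}) holds there, and together with (\ref{edp1}) this shows that $(\tilde{g},\tilde{\omega})$ satisfies the complete system (\ref{edp4})--(\ref{edp1}). I expect the main obstacle to lie in this divergence computation and the dimension-four cancellation of the curvature terms: one must carry along every non-metricity ($\tilde{\nabla}\tilde{g}\neq 0$) correction when moving metrics past covariant derivatives, and carefully verify that the gap between (\ref{CK2}) and (\ref{gauge2}) is exactly $\tilde{\nabla}_{\nu}\chi$ and nothing else — which is what keeps the resulting equation for $\chi$ homogeneous and closed.
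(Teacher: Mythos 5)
Your proposal is correct and follows essentially the same route as the paper: you obtain the same closed, linear, homogeneous second-order equation $\tilde g^{\alpha\beta}\tilde\nabla_\alpha\tilde\nabla_\beta(\tilde\nabla_\lambda\tilde\omega^\lambda)+\tilde\omega^\beta\tilde\nabla_\beta(\tilde\nabla_\lambda\tilde\omega^\lambda)=0$ by combining the gap between the reduced and full equations with the dimension-four cancellation stemming from ${}^{A}\tilde R_{\mu\nu}=\tfrac{n+1}{2}F_{\mu\nu}$, and then conclude $\tilde\nabla_\lambda\tilde\omega^\lambda\equiv0$ via Cauchy--Kovalevskaya uniqueness with the vanishing data (\ref{gaugeconstraint1})--(\ref{gaugeconstraint2}). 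The only cosmetic differences are that you re-derive the four-dimensional double-divergence identity for $F$ (which the paper simply invokes from its earlier corollary) and that your proportionality between $\tilde g^{\alpha\beta}\tilde\nabla_\alpha F_{\nu\beta}$ and $\tilde\nabla_\nu(\tilde\nabla_\lambda\tilde\omega^\lambda)$ carries a different constant factor, which is immaterial since the resulting equation is homogeneous.
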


\begin{proof}
If $dim(\tilde{M})=4$ then we have seen that the following identity is satisfied on $\tilde{M}$:
\begin{align*}
\tilde{g}^{\alpha\beta}\tilde{\nabla}_{\alpha}(\tilde{g}^{\mu\nu}\tilde{\nabla}_{\mu}F_{\beta\nu})+\tilde{\omega}^{\beta}\tilde{g}^{\mu\nu}\tilde{\nabla}_{\mu}F_{\beta\nu}=0.
\end{align*}
Also if $(\tilde{g},\tilde{\omega})$ satisfy the reduced equations, then from (\ref{gauge2}) we get that
\begin{align}\label{gaugeprop1}
\tilde{g}^{\mu\nu}\tilde{\nabla}_{\mu}F_{\beta\nu}=-\tilde{\nabla}_{\beta}\tilde{\nabla}_{\lambda}\tilde{\omega}^{\lambda}
\end{align}
Then the previous identity gives us the following:
\begin{align*}
\tilde{g}^{\alpha\beta}\tilde{\nabla}_{\alpha}\tilde{\nabla}_{\beta}(\tilde{\nabla}_{\lambda}\tilde{\omega}^{\lambda})+\tilde{\omega}^{\beta}\tilde{\nabla}_{\beta}(\tilde{\nabla}_{\lambda}\tilde{\omega}^{\lambda})=0.
\end{align*}
It is not difficult to show that this is a second order linear and homogeneous equation for the function $\tilde{\nabla}_{\lambda}\tilde{\omega}^{\lambda}$ which can be rewritten as follows:
\begin{align}
\frac{\partial^2(\tilde{\nabla}_{\lambda}\tilde{\omega}^{\lambda})}{\partial y^2}=\mathcal{F}(x,y,\tilde{\nabla}_{\lambda}\tilde{\omega}^{\lambda},\partial_{\alpha}(\tilde{\nabla}_{\lambda}\tilde{\omega}^{\lambda}),\partial_{i\alpha}(\tilde{\nabla}_{\lambda}\tilde{\omega}^{\lambda}))
\end{align}
where the right-hand side is an analytic function at the origin. Then the Cauchy-Kovalevskaya theorem guarantees the existence of a unique solution for this equation satisfying the initial data (\ref{gaugeconstraint1})-(\ref{gaugeconstraint2}). Since $\tilde{\nabla}_{\lambda}\tilde{\omega}^{\lambda}=0$ satisfies all these requirements, we get that this is the unique solution. Using this in (\ref{gaugeprop1}) we see that under these conditions $(\tilde{g},\tilde{\omega})$ satisfies the full system (\ref{edp4})-(\ref{edp1}).
\end{proof}

Using this lemma, which only works in the 4-dimensional case, we see that we
should look at (\ref{gaugeconstraint1}) and (\ref{gaugeconstraint2}) as
additional constraints. Thus, our system of constraint equations consists of
the equations (\ref{edp2})-(\ref{edp3}) on the hypersurface $\Sigma_{0},$
together with the equations (\ref{gaugeconstraint1})-(\ref{gaugeconstraint2}).
This system is posed for the second fundamental form of $\Sigma_{0}$,
$\Omega_{ij}$, and the initial data $\omega_{\beta}^{\prime},$ and will be
referred as the \textit{Weyl constraints equations}. We shall denote the
initial data set by $(\Sigma_{0},g,\omega,\Omega,\omega^{\prime})$, where
$(\Sigma_{0},g,\omega)$ gives the Weyl structure of the hypersurface
$\Sigma_{0}$. With these notations, we can state the following theorem:

\begin{thm}
Let $(\Sigma_{0},g,\omega,\Omega,\omega^{\prime})$ be an initial data set
satisfying the Weyl constraint equations. Then $(\Sigma_{0},g,\omega)$ admits
a local analytic isometric embedding around $p\in\Sigma_{0}$ in a Weyl
manifold $(\tilde{M}^{4},\tilde{g},\tilde{\omega})$ such that the symmetric
part of the Ricci tensor of the embedding manifold vanishes.
\end{thm}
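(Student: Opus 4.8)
The plan is to combine the Cauchy--Kovalevskaya construction carried out above with the two propagation lemmas and Theorem~\ref{thm1}, working throughout in the product coordinates $(x^{1},\dots,x^{n},y)$ with $n+1=4$. First I would fix once and for all an analytic, nowhere-vanishing function $\overline{\psi}$ on a neighbourhood of $0\in\mathbb{R}^{4}$ (with respect to which the Weyl constraint equations are understood; one may simply take $\overline{\psi}\equiv 1$) and feed the initial data $(g,\omega,\Omega,\omega^{\prime})$ into the reduced system~(\ref{sis}) through the conditions (\ref{initialdata1})--(\ref{initialdata2}). Since $g$ is nondegenerate, the right-hand sides $F_{ij},\mathcal{U}_{\beta}$ of~(\ref{sis}) are analytic at the relevant point, so the Cauchy--Kovalevskaya theorem produces analytic solutions $(\overline{g}_{ij},\tilde{\omega}_{\alpha})$ of~(\ref{sis}) on a neighbourhood of $0\in\mathbb{R}^{4}$; shrinking it if necessary, continuity yields $\det(\overline{g}_{ij})\neq 0$ and $\overline{\psi}\neq 0$ there. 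Because $(\Sigma_{0},g,\omega,\Omega,\omega^{\prime})$ satisfies the Weyl constraint equations, this solution satisfies (\ref{edp2})--(\ref{edp3}) and the gauge constraints (\ref{gaugeconstraint1})--(\ref{gaugeconstraint2}) on $\Sigma_{0}$.

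Next I would invoke the two propagation lemmas in turn. As $n+1=4$ and the gauge constraints hold on $\Sigma_{0}$, the preceding lemma (which propagates the Lorentz-gauge condition $\tilde{\nabla}_{\lambda}\tilde{\omega}^{\lambda}=0$) upgrades the reduced solution to a solution of the full system (\ref{edp4})--(\ref{edp1}) on a neighbourhood of the origin; in particular (\ref{edp4}) holds, which, via the contracted Bianchi identity (\ref{contractedbianchi}) applied on $\tilde{M}^{4}$, is equivalent to $\tilde{g}^{\alpha\beta}\tilde{\nabla}_{\alpha}^{S}\tilde{G}_{\beta\sigma}=0$. Feeding (\ref{edp1}) and (\ref{edp4}), now valid on a full neighbourhood, together with (\ref{edp2})--(\ref{edp3}) on $\Sigma_{0}$ into Lemma~\ref{lemma1}, we get (\ref{edp2})--(\ref{edp3}) on a neighbourhood of $0\in\mathbb{R}^{4}$. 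Hence ${}^{S}\tilde{R}_{ij}=0$, ${}^{S}\tilde{R}_{i(n+1)}={}^{S}\tilde{R}_{(n+1)i}=0$ and ${}^{S}\tilde{G}_{n+1}^{n+1}=0$ all hold near the origin, and a short contraction argument closes the gap: since $\tilde{g}$ is block-diagonal one has ${}^{S}\tilde{G}_{n+1}^{n+1}=\tfrac{\epsilon}{\overline{\psi}^{2}}\,{}^{S}\tilde{G}_{(n+1)(n+1)}$, and, using (\ref{edp1}), the only surviving term of the scalar curvature is $\tilde{g}^{(n+1)(n+1)}\,{}^{S}\tilde{R}_{(n+1)(n+1)}$, so that ${}^{S}\tilde{G}_{(n+1)(n+1)}=\tfrac12\,{}^{S}\tilde{R}_{(n+1)(n+1)}$ and therefore ${}^{S}\tilde{R}_{(n+1)(n+1)}=0$ as well. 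Consequently ${}^{S}\tilde{R}_{\alpha\beta}=0$ on a neighbourhood of the origin.

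Finally I would read off the embedding from Theorem~\ref{thm1}: setting $\overline{\omega}_{k}\doteq\tilde{\omega}_{k}$ for $k=1,\dots,n$, the analytic functions $\overline{g}_{ik},\overline{\psi},\overline{\omega}_{k},\tilde{\omega}_{n+1}$ satisfy $\overline{g}_{ik}=\overline{g}_{ki}$, $\det(\overline{g}_{ik})\neq 0$ and $\overline{\psi}\neq 0$, the functions $\overline{g}_{ik}$ and $\overline{\omega}_{k}$ reduce on $y=0$ to $g_{ik}$ and $\omega_{k}$, and in the product coordinates $\tilde{g}=\overline{g}_{ik}dx^{i}\otimes dx^{k}+\epsilon\overline{\psi}^{2}dy\otimes dy$ and $\tilde{\omega}=\overline{\omega}_{k}dx^{k}+\tilde{\omega}_{n+1}dy$. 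Theorem~\ref{thm1} then delivers a local analytic isometric embedding of $(\Sigma_{0},g,\omega)$ around $p$ into the Weyl manifold $(\tilde{M}^{4},\tilde{g},\tilde{\omega})$, whose Ricci tensor has vanishing symmetric part by the previous paragraph.

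The step I expect to be the main obstacle is the bookkeeping that justifies the two lemma applications: one must check that, after the $\partial_{y}^{2}$-terms have been eliminated by means of~(\ref{sis}), the gauge constraints (\ref{gaugeconstraint1})--(\ref{gaugeconstraint2}) and the equations (\ref{edp2})--(\ref{edp3}) are genuinely conditions on the data $(\Sigma_{0},g,\omega,\Omega,\omega^{\prime})$ (and the fixed $\overline{\psi}$) alone --- so that ``satisfying the Weyl constraint equations'' is precisely the hypothesis the two lemmas consume --- and that the restriction $n+1=4$ is invoked only where the preceding lemma truly needs it, namely in the vanishing of the $\tfrac{4-n}{2}F^{\mu\nu}F_{\mu\nu}$ term inherited from Corollary~\ref{coro2}, and nowhere else in a way that would break in higher dimensions.
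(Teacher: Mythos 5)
Your proposal is correct and follows essentially the same route as the paper: solve the reduced system (\ref{sis}) by Cauchy--Kovalevskaya from the given data, use the 4-dimensional gauge-propagation lemma to recover (\ref{edp4})--(\ref{edp1}), then Lemma~\ref{lemma1} to propagate (\ref{edp2})--(\ref{edp3}), and finally the contraction argument plus Theorem~\ref{thm1} to conclude ${}^{S}\tilde{R}_{\alpha\beta}=0$ and read off the embedding. Your identification of where $n+1=4$ enters (the $\tfrac{4-n}{2}F^{\mu\nu}F_{\mu\nu}$ term) also matches the paper.
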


Using this theorem, we see that in order to guarantee the existence of an
isometric embedding of $(M^{3},g,\omega)$ at $p\in M^{3}$ in a Weyl manifold
$(\tilde{M}^{4},\tilde{g},\tilde{\omega})$ having vanishing symmetric part of
its Ricci tensor, we just need to show that we can always find an initial data
set $(M^{3},g,\omega,\Omega,\omega^{\prime})$ satisfying the Weyl constraint
equations in a neighborhood of $p\in M^{3}$. When dealing with these
constraints, we shall make use of the \textit{gauge} freedom we have in Weyl's
geometry. We have already seen that if we can construct an embedding for some
element $(M,g,\omega)\in$ $(M,[g],[\omega])$ in $(\tilde{M},\tilde{g}%
,\tilde{\omega})\in$ $(\tilde{M},[\tilde{g}],[\tilde{\omega}])$ then we can
construct an embedding for each element of $(M,[g],[\omega])$ in some element
of $(\tilde{M},[\tilde{g}],[\tilde{\omega}])$. Thus, we shall select a
particular element of $(M,[g],[\omega])$ where (\ref{gaugeconstraint1}) is
satisfied. Let us show that we can always do this. First, consider that
$(\overline{g}_{ij},\tilde{\omega}_{\beta})$ is a solution of (\ref{sis}) in a
neighborhood $U$ of $0\in\mathbb{R}^{n+1}$ satisfying the initial data
(\ref{initialdata1})-(\ref{initialdata2}). Then, defining
\[
\tilde{g}\doteq\overline{g}_{ij}dx^{i}\otimes dx^{j}+\epsilon\overline{\psi
}^{2}dy\otimes dy,
\]
we have that $(U,\tilde{g},\tilde{\omega})$ is a well-defined Weyl manifold.
Under these assumptions note that
\[
\tilde{\nabla}_{\lambda}\tilde{\omega}^{\lambda}=\overline{\nabla}_{k}%
\tilde{\omega}^{k}+\tilde{\nabla}_{n+1}\tilde{\omega}^{n+1},
\]
hence
\[
\tilde{\nabla}_{\lambda}\tilde{\omega}^{\lambda}|_{\Sigma_{0}}=\nabla
_{k}\omega^{k}+\tilde{\nabla}_{n+1}\tilde{\omega}^{n+1}(x,0).
\]
This last expression only depends on the initial data $(g,\omega)$ and
$\tilde{\omega}_{n+1}|_{\Sigma_{0}},\partial_{n+1}\tilde{\omega}%
_{n+1}|_{\Sigma_{0}}$. We now make the Weyl transformation
\begin{align*}
g  &  \rightarrow e^{-f}g\\
\omega_{k}  &  \rightarrow\omega_{k}-\partial_{k}f
\end{align*}
for some analytic function $f$. Then, for $(e^{-f}g,\omega-df)$
(\ref{gaugeconstraint1}) is equivalent to the equation
\begin{align}
\label{gaugeconstraint1.1}g^{ku}\nabla_{k}\nabla_{u}f+g^{ku}(\omega_{k}%
-\nabla_{k}f)(\omega_{u}-\nabla_{k}f)+g^{ku}\nabla_{k}\omega_{u}%
+e^{f}(\partial_{y}\tilde{\omega}^{4}+\tilde{\Gamma}_{4\sigma}^{4}%
\tilde{\omega}^{\sigma})|_{y=0}=0,
\end{align}
where $\tilde{\Gamma}_{4\sigma}^{4}=\frac{1}{\overline{\psi}}\partial_{\sigma
}\overline{\psi}-\frac{1}{2}\tilde{\omega}_{\sigma}$. Since $\overline{\psi}$
is considered as a given analytic function and both $\tilde{\omega}%
_{4}|_{\Sigma_{0}}$ and $\partial_{y}\tilde{\omega}_{4}|_{\Sigma_{0}}$ are
also arbitrary given analytic functions, then (\ref{gaugeconstraint1.1}) is a
second-order PDE for the function $f$. Thus, from now on, we shall regard
$\overline{\psi}|_{\Sigma_{0}}=\psi(x)$, $\tilde{\omega}_{4}|_{\Sigma_{0}%
}\doteq\omega_{4}(x)$ and $\partial_{y}\tilde{\omega}_{4}|_{\Sigma_{0}}%
\doteq\eta(x)$ as given analytic functions, which will be involved in the
initial data of the system (\ref{sis}). Then, we can guarantee the existence
of an analytic solution for (\ref{gaugeconstraint1.1}). To see this, we can
use a coordinate system $(x^{i})$ on $M$ around $p$, satisfying that
$g_{1k^{\prime}}=0$, with $k^{\prime}=2,3$. In this way,
(\ref{gaugeconstraint1.1}) can be cast in the form
\begin{align*}
g^{11}\nabla_{1}\nabla_{1}f  &  +g^{k^{\prime}u^{\prime}}\nabla_{k^{\prime}%
}\nabla_{u^{\prime}}f+g^{ku}(\omega_{k}-\nabla_{k}f)(\omega_{u}-\nabla
_{k}f)+g^{ku}\nabla_{k}\omega_{u}\\
&  +e^{f}\{\frac{\epsilon}{\psi^{2}}\eta-\frac{2\epsilon}{\psi^{3}}%
\frac{\partial\overline{\psi}}{\partial y}|_{\Sigma_{0}}\omega_{4}+(\frac
{1}{\psi}\partial_{k}\psi-\frac{1}{2}\omega_{k})\omega^{k}+\frac{\epsilon
}{\psi}(\frac{1}{\psi}\frac{\partial}{\partial y}\overline{\psi}|_{\Sigma_{0}%
}-\frac{1}{2}\omega_{4})\omega_{4}\}=0,
\end{align*}
where $g^{11}\neq0$ in a neighborhood of the origin, $k^{\prime},u^{\prime
}=2,3$, and all the known quantities involved are analytic in a neighborhood
of $0\in\mathbb{R}^{n}$. Therefore, this last equation has the form
\[
\frac{\partial^{2}f}{\partial(x^{1})^{2}}=\mathcal{U}(x,\partial_{i}%
f,\partial_{k^{\prime}i}f),
\]
where the right-hand side is analytic at the origin, and hence the
Cauchy-Kovalevskaya theorem guarantees the existence of an analytic solution.
We thus have shown that, given a Weyl manifold $(M^{3},g,\omega)$ we can
always find an element of $(M^{3},[g],[\omega])$ for which
(\ref{gaugeconstraint1}) is satisfied. Hence there is no loss of generality in
assuming that $(M^{3},g,\omega)$ satisfies this condition. In this way, we can
reduce the Weyl constraint equations to the following set of equations:%

\begin{align}
\epsilon{\psi}g^{kl}(\nabla_{k}\Omega_{il}-\nabla_{i}\Omega_{kl})+\frac{\psi
}{2}g^{kl}(\omega_{i}\Omega_{kl}-\omega_{k}\Omega_{il})+\frac{n-1}{4}%
(\partial_{i}\eta-\omega_{i}^{^{\prime}})=0\label{reducedconstraints1}%
\end{align}
\begin{align}
g^{ij}g^{kl}(R_{kilj}+\epsilon(\Omega_{ij}\Omega_{kl}-\Omega_{ki}\Omega
_{jl}))=0\label{reducedconstraints2}%
\end{align}
\begin{align}
\frac{\partial\tilde{\nabla}_{\lambda}\tilde{\omega}^{\lambda}}{\partial
y}|_{\Sigma_{0}}=0.\label{reducedconstraints3}%
\end{align}
Equations (\ref{reducedconstraints1}) and (\ref{reducedconstraints2}) are
dealt with in the same way it was done in \cite{dahia1}, just carrying the
extra terms along the same computations. Doing this, from
(\ref{reducedconstraints2}) we find an explicit expression for $\Omega_{11}$
in terms of the other variables and from (\ref{reducedconstraints1}) a set
first-order PDEs of the Cauchy-Kovalevskaya-type for the functions
$\Omega_{1k^{\prime}}$ and $\Omega_{r^{\prime}3}$, where $k^{\prime}=2,3$ and
$r^{\prime}$ is fixed, having either the value $2$ or $3$. In this system, the
remaining components of $\Omega_{ij}$ are set as arbitrary analytic functions.
In the same way we did when dealing with (\ref{gaugeconstraint1}), in this
procedure, a coordinate system on $M^{3}$ around $p$ is chosen such that
$g_{1k^{\prime}}=0$, and the coordinate $x^{1}$ is chosen as the variable with
respect to which we pose the constraint equations in the Cauchy-Kovalesvskaya
form. Now, we shall deal with the remaining equation
(\ref{reducedconstraints3}). First, let us write it down explicitly:
\begin{equation}
\frac{\partial(\tilde{\nabla}_{\lambda}\tilde{\omega}^{\lambda})}{\partial
y}=\frac{\partial(\partial_{k}\tilde{\omega}^{k})}{\partial y}+\frac
{\partial^{2}\tilde{\omega}^{4}}{\partial y^{2}}+\frac{\partial\tilde{\Gamma
}_{kj}^{k}}{\partial y}\tilde{\omega}^{j}+\tilde{\Gamma}_{kj}^{k}%
\frac{\partial\tilde{\omega}^{j}}{\partial y}+\frac{\partial\tilde{\Gamma
}_{4\sigma}^{4}}{\partial y}\tilde{\omega}^{\sigma}+\tilde{\Gamma}_{4\sigma
}^{4}\frac{\partial\tilde{\omega}^{\sigma}}{\partial y}.
\label{reducedconstraints4}%
\end{equation}
Using the following expressions for the connection components involved in the
previous expression
\begin{align*}
\tilde{\Gamma}_{4\beta}^{4}  &  =\frac{1}{\overline{\psi}}\partial_{\beta
}\overline{\psi}-\frac{1}{2}\tilde{\omega}_{\beta},\\
\tilde{\Gamma}_{ij}^{l}  &  =\overline{\Gamma}_{ij}^{l},
\end{align*}
together with the definition of the second fundamental form $\Omega,$
\[
\Omega_{ji}=\{-\frac{1}{2\overline{\psi}}\frac{\partial\overline{g}_{ij}%
}{\partial y}+\frac{1}{2\overline{\psi}}\overline{g}_{ij}\tilde{\omega}%
_{4}\}|_{\Sigma_{0}},
\]
we see that when we restrict (\ref{reducedconstraints4}) to $\Sigma_{0}$ the
last four terms depend on given data $g_{ij},\omega_{k},\eta,\omega
_{4}^{\prime},\psi$ and terms up to first-order in $\Omega_{ij}$ and the
remaining $\omega_{k}^{\prime}$. Also, since $(\tilde{g},\tilde{\omega})$
satisfy the reduced equations (\ref{sis}), then:
\[
\frac{\partial^{2}\tilde{\omega}_{4}}{\partial y^{2}}=\mathcal{U}%
_{n+1}(x,y,\overline{g}_{ij},\tilde{\omega}_{\alpha},\partial_{\alpha
}\overline{g}_{ij},\partial_{\beta}\tilde{\omega}_{\alpha},\partial_{i\alpha
}\overline{g}_{ij},\partial_{ai}\tilde{\omega}_{\alpha}),\;\;\;1\leq i<j\leq3
\; ; \; a=1,2,3.
\]
It follows that
\[
\frac{\partial^{2}\tilde{\omega}_{4}}{\partial y^{2}}|_{\Sigma_{0}%
}=\mathcal{U}_{n+1}^{\prime}(x,\Omega_{ij},\omega_{\alpha}^{\prime}%
,\partial_{a}\Omega_{ij}),\;\;\;1\leq i<j\leq3 \; ; \; a=1,2,3.
\]
Thus, constraining (\ref{reducedconstraints4}) to $\Sigma_{0}$ and setting the
left-hand side equal to zero, we get
\[
g^{ku}\partial_{k}\omega_{u}^{\prime}+\partial_{k}g^{ku}\omega_{u}^{\prime
}+\partial_{k4}\overline{g}^{ku}|_{\Sigma_{0}}\omega_{u}+\frac{\partial
\overline{g}^{ku}}{\partial y}|_{\Sigma_{0}}\partial_{k}\omega_{u}%
+\mathcal{O}(x,\Omega_{ij},\omega_{k}^{\prime},\partial_{k}\Omega_{ij})=0.
\]
Using the same special form of the metric in the coordinate system used to
study the constraints, we can rewrite this last equation as
\[
\frac{\partial\omega_{1}^{\prime}}{\partial x^{1}}=\mathcal{O}^{\prime
}(x,\Omega_{ij},\omega_{k}^{\prime},\partial_{k^{\prime}}\omega^{\prime
}_{j^{\prime}},\partial_{k}\Omega_{ij}) \; \; i,j,k=1,2,3 \;\; ; \;\;
j^{\prime},k^{\prime}=2,3.
\]
Then, we see that the constraint equations (\ref{reducedconstraints1}%
)-(\ref{reducedconstraints3}) can be written as a set of first-order PDEs of
the following form:
\begin{align}
\label{reducedconstraintsis}%
\begin{split}
\frac{\partial\Omega_{1k^{\prime}}}{\partial x^{1}}  &  =\mathcal{H}%
_{k^{\prime}}(x,\Omega_{1j^{\prime}},\omega_{1}^{\prime},\partial_{u^{\prime}%
}\Omega_{1j^{\prime}}),\;\;u^{\prime},j^{\prime},k^{\prime}=2,3\\
\frac{\partial\Omega_{3r^{\prime}}}{\partial x^{1}}  &  =\mathcal{H}%
_{r^{\prime}}(x,\Omega_{1j^{\prime}},\omega_{1}^{\prime},\partial_{u^{\prime}%
}\Omega_{1j^{\prime}}),\;\;u^{\prime},j^{\prime}=2,3\;\;;\;\;r^{\prime
}\;\;fixed\;\;with\;\;r^{\prime}=2\;\;or\;\;r^{\prime}=3\\
\frac{\partial\omega_{1}^{\prime}}{\partial x^{1}}  &  =\mathcal{O}^{\prime
}(x,\Omega_{1j^{\prime}},\omega_{1}^{\prime},\partial_{u^{\prime}}%
\Omega_{1j^{\prime}}),\;\;u^{\prime},j^{\prime}=2,3
\end{split}
\end{align}
together with an explicit algebraic expression for $\Omega_{11}$. In this set
up the rest of the $\Omega_{ij}$ and $\omega_{2}^{\prime},\dots,\omega
_{n}^{\prime}$ are set as given arbitrary analytic functions. The equations
(\ref{reducedconstraintsis}) are of the Cauchy-Kovalevskaya type and hence we
know that this system admits a solution. We now can state the main result of
this section.

\begin{thm}
Any 3-dimensional Weyl structure $(M^{3},[g],[\omega])$ admits a local
analytic isometric embedding at any point $p\in M^{3}$ in a 4-dimensional Weyl
structure $(\tilde{M}^{4},[\tilde{g}],[\tilde{\omega}])$ having vanishing
symmetric part of its Ricci tensor.
\end{thm}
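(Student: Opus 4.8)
The plan is to assemble the machinery developed above: reduce to a convenient gauge, solve the Weyl constraint equations in that gauge, invoke the previous theorem to obtain an embedding of one representative, and then propagate the embedding to the whole conformal class.

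First I would fix an arbitrary representative $(M^{3},g,\omega)$ of the class $(M^{3},[g],[\omega])$ together with a coordinate chart around $p$ adapted so that $g_{1k^{\prime}}=0$ for $k^{\prime}=2,3$. The first step is to use the Weyl gauge freedom (\ref{weyl group}) to arrange that the Lorenz-type condition (\ref{gaugeconstraint1}) holds on $\Sigma_{0}$. As shown above, after performing $g\mapsto e^{-f}g$, $\omega\mapsto\omega-df$, equation (\ref{gaugeconstraint1}) becomes the scalar second-order PDE (\ref{gaugeconstraint1.1}) for $f$, whose coefficients are built from the freely prescribable analytic data $\psi=\overline{\psi}|_{\Sigma_{0}}$, $\omega_{4}=\tilde{\omega}_{4}|_{\Sigma_{0}}$ and $\eta=\partial_{y}\tilde{\omega}_{4}|_{\Sigma_{0}}$; in the adapted chart it takes the form $\partial^{2}f/\partial(x^{1})^{2}=\mathcal{U}(x,\partial_{i}f,\partial_{k^{\prime}i}f)$ with analytic right-hand side, so the Cauchy--Kovalevskaya theorem yields an analytic solution. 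Hence there is no loss of generality in assuming that $(M^{3},g,\omega)$ already satisfies (\ref{gaugeconstraint1}).

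Next I would solve the reduced Weyl constraint equations (\ref{reducedconstraints1})--(\ref{reducedconstraints3}). Equation (\ref{reducedconstraints2}) is algebraic in $\Omega_{ij}$ and, using $g_{1k^{\prime}}=0$ and $g^{11}\neq0$, can be solved explicitly for $\Omega_{11}$ in terms of the remaining components and of $g$; substituting this into the others, (\ref{reducedconstraints1}) and (\ref{reducedconstraints3}) become the first-order system (\ref{reducedconstraintsis}) for $\Omega_{1k^{\prime}}$, $\Omega_{3r^{\prime}}$ and $\omega_{1}^{\prime}$, with the remaining $\Omega_{ij}$ and $\omega_{2}^{\prime},\dots,\omega_{n}^{\prime}$ left as arbitrary analytic functions. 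This system is of Cauchy--Kovalevskaya type in the variable $x^{1}$, so it admits an analytic solution near $p$. Collecting the data so obtained, we have an initial data set $(M^{3},g,\omega,\Omega,\omega^{\prime})$ satisfying the Weyl constraint equations, and the previous theorem then provides a local analytic isometric embedding of $(M^{3},g,\omega)$, around $p$, into a Weyl manifold $(\tilde{M}^{4},\tilde{g},\tilde{\omega})$ whose Ricci tensor has vanishing symmetric part. Finally I would upgrade the statement to Weyl structures exactly as in the integrable case: any other representative of $(M^{3},[g],[\omega])$ is $(M^{3},e^{-h}g,\omega-dh)$, and setting $f(x,y)\doteq h(x)+y$ on $\tilde{M}^{4}$ makes the inclusion a local analytic isometric embedding of $(M^{3},e^{-h}g,\omega-dh)$ into $(\tilde{M}^{4},e^{-f}\tilde{g},\tilde{\omega}-df)$, which represents the same bulk structure; since the Ricci tensor, hence its symmetric part, depends only on the Weyl connection and is a class invariant, the vanishing of ${}^{S}\tilde{R}_{\alpha\beta}$ is inherited by every representative.

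As for the main obstacle: essentially all the analytic heavy lifting has already been done in Lemma~\ref{lemma1}, the subsequent gauge-propagation lemma, and the previous theorem, so what remains is bookkeeping consistency of the gauge choice --- the function $f$ used to enforce (\ref{gaugeconstraint1}) must be chosen before, and compatibly with, the prescription of the free data $\psi,\omega_{4},\eta$ and of the constraint data $\Omega,\omega^{\prime}$ that feed into the reduced system (\ref{sis}), and one should check that no circularity arises. It is also worth emphasizing why the argument is confined to $\dim\tilde{M}=4$: the propagation of the Lorenz gauge relies on an identity valid only in dimension four, and, correlatively, Corollary~\ref{coro2} does \emph{not} force $F^{\mu\nu}F_{\mu\nu}=0$ in the bulk, so no no-go obstruction intervenes; extending the result to higher dimensions would require circumventing precisely these two points.
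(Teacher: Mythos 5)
Your proposal is correct and follows essentially the same route as the paper: gauge-fixing the representative so that (\ref{gaugeconstraint1}) holds by solving the scalar equation (\ref{gaugeconstraint1.1}) via Cauchy--Kovalevskaya in the adapted chart with $g_{1k^{\prime}}=0$, solving the reduced constraint system (\ref{reducedconstraintsis}) (with $\Omega_{11}$ determined algebraically) to produce admissible initial data, invoking the preceding theorem to embed that representative, and then extending to the whole Weyl structure by the class-invariance argument with $f(x,y)=h(x)+y$. Your closing remarks on the dimension-four restriction for the gauge propagation and on the consistency of the freely prescribed data match the paper's own treatment.
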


\section{Final Comments}

In the present article, we have considered the embbeding problem in the
context of Weyl geometry and have proven that some of the
Campbell-Magaard-type theorems can be naturally extended from Riemannian to
Weyl's geometry, although some instances appear that are not exactly analogous
to their Riemannian counterpart. The investigation of embeddings in Weyl
manifolds has led us to discover an interesting and rather unexpected no-go
result in this direction, and to establish an important geometrical identity
which seems to be essential for studying embeddings in Weyl spaces, in
arbitrary dimensions, in which the symmetric part of its Ricci tensor
vanishes. We have worked out the embedding problem in the 3-dimensional case
and showed that this solution does not hold in other dimensions. We believe
that the complete solution of the general problem, still left open, may be
regarded as a mathematical motivation for studying other embedding problems in
the framework of Weyl's geometry, which may originate from modern theoretical
physics. Finally, we would like to mention that an extension of \cite{dahia3}
to the context of Weyl geometry is being studied at the moment. The results
coming from these further studies should be considered as a completion of the
present article.

\section*{Acknowledgements}

\noindent R. A and C. R. would like to thank CNPq and CLAF for financial support.

\end{document}